\pdfoutput=1
\documentclass[conference]{IEEEtran}
\IEEEoverridecommandlockouts
\usepackage{cite}
\usepackage{amsmath,amssymb,amsfonts}
\usepackage{algorithmic}
\usepackage{graphicx}
\usepackage{textcomp}
\usepackage{xcolor}
\def\BibTeX{{\rm B\kern-.05em{\sc i\kern-.025em b}\kern-.08em
    T\kern-.1667em\lower.7ex\hbox{E}\kern-.125emX}}




\usepackage{hyperref}
\hypersetup{colorlinks=true, unicode=true, linkcolor=[rgb]{0.10,0.05,0.67}, citecolor=[rgb]{0.10,0.05,0.67}, filecolor=[rgb]{0.10,0.05,0.67}, urlcolor=[rgb]{0.10,0.05,0.67}}

\hypersetup{
  pdftitle={Maximally Extendable Product Codes are Good Coboundary Expanders},
  pdfauthor={Gleb Kalachev and Pavel Panteleev},
  pdfkeywords={high-dimensional expanders, tensor product codes, quantum LTC}
}

\usepackage{amsthm}
\usepackage{thmtools}

\usepackage{tikz}
\usetikzlibrary{cd}
\usetikzlibrary{decorations.pathreplacing}
\usetikzlibrary{patterns}
\usepackage{ifthen}

\tikzset{vertex/.style={ draw , circle , fill , inner sep=0em , minimum size=0.3em}}
\tikzset{empty/.style={inner sep=0em, outer sep=0em, minimum size=0em}}

\newcommand{\Iff}{\textbf{if\textcompwordmark f} }

\newcommand{\rbr}[1]{\left(#1\right)}

\newcommand{\abr}[1]{\left\langle#1\right\rangle}
\newcommand{\floorbr}[1]{\left\lfloor #1\right\rfloor}

\newcommand{\fbr}[1]{\left\{#1\right\}}

\newcommand{\bigabsbr}[1]{\bigl|#1\bigr|}
\newcommand{\bigrbr}[1]{\bigl(#1\bigr)}

\let\le\leqslant
\let\ge\geqslant
\let\phi\varphi

\newcommand{\eps}{\varepsilon}

\theoremstyle{plain}
\newtheorem{theorem}{Theorem}
\newtheorem{proposition}[theorem]{Proposition}
\newtheorem{lemma}[theorem]{Lemma}
\newtheorem{corollary}[theorem]{Corollary}
\newtheorem{claim}{Claim}[theorem]

\newtheorem{exttheorem}{Theorem}

\newtheorem{extlemma}[exttheorem]{Lemma}

\theoremstyle{definition}
\newtheorem{definition}[theorem]{Definition}

\theoremstyle{remark}
\newtheorem{remark}[theorem]{Remark}

\newcommand{\NN}{\mathbb{N}}

\newcommand{\bC}{\mathbf{C}}

\newcommand{\vp}{\mathbf{p}}
\newcommand{\va}{\mathbf{a}}

\newcommand{\abs}[1]{\left\lvert #1 \right\rvert}
\newcommand{\norm}[1]{\left\lVert#1\right\rVert}

\newcommand{\cA}{\mathcal{A}}
\newcommand{\cB}{\mathcal{B}}
\newcommand{\cC}{\mathcal{C}}
\newcommand{\cF}{\mathcal{F}}
\newcommand{\cG}{\mathcal{G}}
\newcommand{\cL}{\mathcal{L}}

\newcommand{\cay}{\mathrm{Cay}}

\newcommand{\Z}{\mathbb{Z}}
\newcommand{\F}{\mathbb{F}}

\DeclareMathOperator{\rk}{rk}
\DeclareMathOperator{\im}{im}
\DeclareMathOperator{\supp}{supp}

\newcommand{\Gr}{{\sf Gr}}
\newcommand{\PP}{\mathsf{P}}


\begin{document}

\title{Maximally Extendable Product Codes are\\ Good Coboundary Expanders\\
\thanks{This work was supported by the Ministry of Economic Development of the Russian Federation in accordance with the subsidy agreement (agreement identifier
000000C313925P4H0002; grant No 139-15-2025-012).}
}

\author{\IEEEauthorblockN{Gleb Kalachev}
\IEEEauthorblockA{\textit{Faculty of Mechanics and Mathematics} \\
\textit{Moscow State University}\\
Moscow, Russia\\
gleb.kalachev@yandex.ru
}
\and
\IEEEauthorblockN{Pavel Panteleev}
\IEEEauthorblockA{\textit{Faculty of Mechanics and Mathematics} \\
\textit{Moscow State University}\\
Moscow, Russia\\
panpavel@yandex.ru
}
}

\IEEEaftertitletext{%
\vspace{-1.2\baselineskip}%
\noindent\emph{This is the author’s version of the paper accepted for publication in the Proceedings of the 66th IEEE Symposium on Foundations of Computer Science (FOCS 2025). The final version will appear at IEEE Xplore.}\par\medskip
}

\maketitle

\begin{abstract}
We investigate the coboundary expansion property of tensor product codes, known as product expansion, which plays an important role in recent constructions of good quantum LDPC codes and classical locally testable codes. Prior research has shown that this property is equivalent to agreement testability and robust testability for products of two codes with linear distance. However, for products of more than two codes, product expansion is a strictly stronger property. In this paper, we prove that a collection of an arbitrary number of random codes over a sufficiently large field has good product expansion. We believe that, in the case of four codes, the same ideas can be used to construct good quantum locally testable codes, in a way similar to the current constructions that use only products of two codes.
\end{abstract}

\begin{IEEEkeywords}
high-dimensional expanders, tensor product codes, quantum LTC
\end{IEEEkeywords}

\section{Introduction}

The study of tensor product codes and their coboundary expansion properties has gained significant attention in recent years due to their key role in the discovery of classical locally testable codes (LTCs)~\cite{Dinur:stoc2022, Panteleev&Kalachev:stoc2022} and good quantum low-density parity-check (qLDPC) codes~\cite{Panteleev&Kalachev:stoc2022}. Already in~2014, it was noticed that simplicial complexes with good coboundary expansion over $\F_2$, usually referred to as \emph{high-dimensional expanders} (HDXs), could potentially be used to construct LTCs with large soundness~\cite{Kaufman:2014a} and qLDPC codes with large distances~\cite{Kaufman:2014, Evra:2020}. 

Inspired by these works and the breakthrough paper on fiber bundle codes~\cite{Hastings:2021:fiber}, a~construction called \emph{expander lifted product codes} was proposed in~\cite{Panteleev&Kalachev:stoc2022}, which combines the ideas of high-dimensional \emph{expansion} with \emph{lifted product codes}~\cite{Panteleev&Kalachev:2019,Panteleev&Kalachev:2021} to simultaneously produce (asymptotically) good qLDPC codes and good LTCs with large soundness. 
Lifted product (LP) codes, initially introduced in QEC'2019~\cite{Panteleev&Kalachev:2019} as a~natural generalization\footnote{In~\cite{Panteleev&Kalachev:2019} they were used in a~slightly less general form and termed \emph{generalized hypergraph product} codes.} of hypergraph product codes~\cite{Tillich&Zemor:2014} and other constructions~\cite{qldpc,Hagiwara:2007,Haah:2011,Kovalev:2013}, were conceptualized and reformulated in~\cite{Panteleev&Kalachev:2021} as a~general framework to construct qLDPC codes using tensor products $\cA \otimes_R \cB$ of chain complexes over a~\emph{ring} $R$ such as a~group algebra\footnote{In fact, this idea works with arbitrary noncommutative rings $R$ (see the footnote in \cite[p.~3]{Panteleev&Kalachev:2021}).} $\F_2[G]$. To obtain binary codes, one views vectors over $R$ as $| G |$ times larger vectors over~$\mathbb{F}_2$, representing each code symbol $\sum_{g \in G} a_g g \in R$ by the block $(a_g)_{g \in G}$ of $| G |$ bits.

But where can we find chain complexes $\cA$ and $\cB$ over $R$? Fortunately, classical LDPC codes constructed using $G$-lifts of graphs for a~group $G$ (e.g., see~\cite[p.~3]{Panteleev&Kalachev:2021}), can also be viewed as codes over $R = \F_2[G]$ instead of~$\F_2$, and thus can be represented by the $2$-term chain complex $R^n\xrightarrow{H} R^m$, where $H\in R^{m\times n}$ is a~parity-check matrix over~$R$.
Important examples of such codes are \emph{Z\'{e}mor codes}~\cite{Zemor:2001}, a~variation of Sipser-Spielman codes~\cite{Sipser:1996} that are usually defined\footnote{For an introduction to Z\'{e}mor codes and the associated Z\'{e}mor decoder, please refer to Spielman’s lectures~\cite{Spielman2009SpectralGraphTheory} or Section~6.2.1 of the lectures by Dwork \& Harsha~\cite{dworkCS369EExpandersComputer}.} as Tanner codes~\cite{Tanner:1981} based on double covers of Ramanujan Cayley graphs~\cite{Margulis:1988, Lubotzky:1988}. These codes were further generalized to lifts of general bipartite graphs in~\cite{Panteleev&Kalachev:2021} and suggested as component codes in the lifted product construction, which was eventually utilized in~\cite{Panteleev&Kalachev:stoc2022} to obtain good LTCs and qLDPC codes.

It was already proved in~\cite[p.~8]{Panteleev&Kalachev:2021} that by combining two \emph{constant-rate} codes using the lifted product construction one can get a~\emph{constant-rate} code, while the question of whether we can also achieve linear distance remained widely open~\cite[p.~14]{Panteleev&Kalachev:2021}. The expander LP codes~\cite{Panteleev&Kalachev:stoc2022} solve this open problem by combining two constant-rate Z\'{e}mor codes in a~way that ensures that the corresponding 3-term chain complex has good high-dimensional expansion in both directions. Then, adopting the methods from the theory of HDXs~\cite{Kaufman:2014a, Kaufman:2014, Evra:2020}, it is possible to show that this implies good \emph{distance} of the qLDPC code and simultaneously good \emph{soundness} of the LTCs from this 3-term complex. However, for this construction to work, it is \emph{required} that the pair of local codes in the Z\'{e}mor construction has a~very specific property called \emph{product expansion} that can be expressed topologically as the~coboundary expansion of the sheaf on the~complete bipartite graph naturally associated with the product code~\cite{kalachevTwosidedRobustlyTestable2023}. The proof in~\cite{Panteleev&Kalachev:stoc2022} shows that this local expansion lifts to the~global one, which gives the desired properties. A~similar approach to lift small tensor product codes to larger ones using HDXs was used in the~independent work on good LTCs~\cite{Dinur:stoc2022} (see also~\cite{LiftingSmallLocally2019}).   
Later it was shown that expander LP codes can also be converted into other examples of good qLDPC codes~\cite{Leverrier:focs2022, Dinur:decoders} that have more natural geometric interpretations, and all these codes have linear time decoding algorithms~\cite{Gu:stoc2023:qpdpc-decoder, Leverrier:qldpcdecoder:2023, Dinur:decoders, guSingleShotDecodingGood2024} as originally conjectured in~\cite{Panteleev&Kalachev:stoc2022}. 

The expansion properties of product codes in more than two dimensions are very important in the context of quantum LTCs~\cite{Aharonov:2015}. One possible way to address the \emph{qLTC conjecture}~\cite{eldarLocalHamiltoniansWhose2017,crossQuantumLocallyTestable2024, dinurqLTC:2024}, positing the existence of good quantum locally testable codes (qLTCs), is to consider a~$4$-dimensional analog of the expander LP codes. One natural way is to use the tensor product of \emph{four} generalized Z\'{e}mor codes instead of \emph{two}.  In the~recent result~\cite{dinurqLTC:2024}, it is shown that using this idea one can get \emph{almost} good qLTCs, provided that both the collection of the four local codes and their duals have good product expansion. There are also two more recent works relying on the product expansion of more than two codes~\cite{golowichQuantumLDPCCodes2024, nguyenQuantumFaultTolerance2024}.

In this paper, we investigate the product expansion property of an~\emph{arbitrary} number of codes. In the case of two codes, product expansion is equivalent to agreement testability and robust testability for products of two codes with linear distance~\cite{kalachevTwosidedRobustlyTestable2023}. However, it is important to note that in the case of more than two codes, it is a~strictly \emph{stronger} property than both the agreement and the robust testability~\cite{Kalachev:example:2023}. The fact that the product expansion of two random codes is good was already proved in~\cite{Dinur:decoders, kalachevTwosidedRobustlyTestable2023}. Our main result shows that a~collection of an~arbitrary number of random codes over a~sufficiently large field has good product expansion with high probability.
Specifically, we prove that for any fixed rates, there exists a~positive constant $\rho$ such that a~randomly chosen collection of codes over a~sufficiently large field is $\rho$-product-expanding, where the field size grows with the code length.

Our work builds on several new key insights and techniques. First, we establish that \emph{locally testable codes} (LTCs) can be used to construct product-expanding collections of codes. We show that if a~collection of codes consists of LTCs, then it is product-expanding. However, since the duals of LTCs have small minimum distance, we cannot directly use LTCs to construct product-expanding collections of both the codes and their duals. To overcome this limitation, we use the concept of \emph{maximally extendable product codes}~\cite{panteleevMaximallyExtendableSheaf2024}, which are product codes that inherit the extendability properties of all other product codes with the same parameters. We provide a~reformulation of the product expansion property in terms of extendable subsets of the product of dual codes, which simplifies the analysis of product expansion in higher dimensions.
We prove that random codes over sufficiently large fields are maximally extendable with high probability, allowing us to construct product-expanding collections of both the codes and their duals.

In addition to our main result, we provide several useful auxiliary results. First, we show that LTCs of \emph{arbitrary} length and with the rate approaching $1$ exist, which allows us to perform rate adaptation in our constructions. Second, we prove that if we take subspaces of a~product-expanding collection of codes, the resulting collection of subcodes remains product-expanding, albeit with a~worse constant.

\section{Definitions and Main Results}
In this section, we define a~property of a~collection of codes called \emph{product expansion}. Before we proceed, let us first briefly recall some standard definitions that can also be found in~\cite{kalachevTwosidedRobustlyTestable2023}. 
We consider only finite fields $\F_q$ of characteristic~$2$. 
Given a~set $S$ and a~field~$\F_q$, denote by $\F_q^S$ the vector space of all formal $\F_q$-linear combinations $v = \sum_{s\in S} v(s)\cdot s$, whose elements are also represented as vectors $(v(s))_{s\in S}$ over $\F_q$ indexed by the set~$S$. According to our definition, $S \subseteq \F_q^S$ is a~basis of the vector space $\F_q^S$, and $\F_q^S \subseteq \F_q^T$ whenever $S\subseteq T$. If $v\in\F_q^B$ and $A\subseteq B$, then by $v|_A$ we denote the \emph{restriction} of $v$ to $A$. A~matrix is called \emph{$\Delta$-limited} if for all its rows and columns the number of non-zero elements is at most $\Delta$.  Also, denote by $I_n$ the identity $n\times n$ matrix. 

Given linear codes $\cC_1,\dots,\cC_D \subseteq \F_q^n$ we can define the (\emph{tensor}) \emph{product code} 
\[
\cC_1\otimes\dots\otimes\cC_D := \{c\in \F_q^{[n]^D} \mid \forall i\in [D]\ \forall \ell\in \cL_i\colon c|_\ell\in  \cC_i\},
\]
where $\F_q^{[n]^D}$ is the set of functions $c\colon [n]^D \to \F_q$  and $\cL_i = \cL_i(n, D)$ is the set of lines parallel to the $i$-th axis in the $D$-dimensional grid $[n]^D$, i.e., 
\[
\cL_i := \fbr{ A_1\times\dots\times A_D \subseteq [n]^D \mid  A_i\! =\! [n], \abs{A_j}\! =\! 1, \forall j\ne i}.
\]

It is convenient to introduce a~notation for the dual code of a~product code~\cite{Wolf:1965, Chien:1973}. For linear codes $\cC_1,\cC_2\subseteq \F_q^{n}$ we denote by $\cC_1\boxplus \cC_2$ the code 
\[(\cC_1^\bot\otimes \cC_2^\bot)^\bot=\cC_1\otimes \F_q^{n}+\F_q^{n}\otimes \cC_2 \subseteq \F_q^{n\times n}.\]
Given a~collection $\cC = (\cC_i)_{i\in [D]}$ of linear codes over~$\F_q$, we can also define the codes 
\begin{align*}
\cC^{(i)} :=&\ \underbrace{\F_q^{n} \otimes\dots \otimes \F_q^{n}}_{(i-1) \text{ times}} \otimes \ \cC_i \otimes \underbrace{\F_q^{n}\otimes \dots\otimes \F_q^{n}}_{(D-i)\text{ times}} \\
=&\ \{ c\in \F_q^{[n]^D} \mid \forall \ell\in \cL_i\colon c|_\ell\in  \cC_i \}.    
\end{align*}
It is clear that $\cC_1\otimes \dots \otimes \cC_D = \cC^{(1)}\cap \dots \cap \cC^{(D)}$ and 
\[\cC_1\boxplus \dots \boxplus \cC_D = \cC^{(1)}+ \dots + \cC^{(D)}.\] Note that every code $\cC^{(i)}$ is the direct sum of $|\cL_i| = n^{D-1}$ copies of the code $\cC_i$. For $x\in \F_q^{[n]^D}$ we denote by $|x|_i$ and $\norm{x}_i$ respectively the number and the fraction of the lines ${\ell\in\cL_i}$ such that $x|_\ell \ne 0$. Clearly, by definition, we have $\norm{x}_i = \frac{1}{n^{D-1}}|x|_i$. By $|x|$ and $\norm{x}$ we denote respectively the \emph{Hamming weight} (i.e., the number of non-zero entries) and the \emph{normalized Hamming weight} (i.e., the fraction of non-zero entries) of $x$. Now we are ready to give our main definition. 

\begin{definition}[product expansion]
Given a~collection $\cC = (\cC_i)_{i\in [D]}$ of linear codes $\cC_i\subseteq \F_q^{n}$, we say that $\cC$ is \emph{$\rho$-product-expanding} if every codeword $c\in \cC_1\boxplus \dots \boxplus \cC_D$ can be represented as a~sum $c = \sum_{i\in[D]} a_i$ where $a_i\in \cC^{(i)}$ for all $i\in [D]$, and the following inequality holds:
\begin{equation}\label{eq:prod-exp}
\rho\sum_{i\in [D]} \norm{a_i}_i \le \norm{c} .    
\end{equation}
\end{definition}

It is not hard to check that (\ref{eq:prod-exp}) can also be expressed as
\begin{equation}\label{eq:prod-exp2}
\rho \le  \frac{\abs{c}}{n\sum_{i\in [D]} \abs{a_i}_i}.    
\end{equation}
We can also define the \emph{product expansion factor} $\rho(\cC)$ for the collection $\cC$ as the maximal value of $\rho$ such that $\cC$ is $\rho$-product-expanding. In \cite{kalachevTwosidedRobustlyTestable2023} (see also~\cite{Dinur:decoders}), it is shown that a~random pair of codes is $\rho$-product-expanding for some positive $\rho$ depending only on the rates of these codes.    

A~collection $\cC = (\cC_i)_{i\in [D]}$ is called \emph{degenerate} if $\cC_i=\F_q^{n}$ for some $i\in [D]$.  In what follows, we will usually consider non-degenerate collections of codes, unless otherwise stated.
In \cite[Lemma 11]{kalachevTwosidedRobustlyTestable2023} it was shown that if a~non-degenerate collection $\cC=(\cC_i)_{i\in [D]}$ of codes $\cC_i\subsetneq \F_q^{n}$ is $\rho$-product-expanding, then each subcollection $\cC_I=(\cC_i)_{i\in I}$, $I\subseteq [D]$, is also $\rho$-product-expanding.

Denote by $\Gr_q(n,k)$ the~\emph{Grassmannian}, i.e., the set of all $k$-dimensional linear subspaces of $\F_q^n$.
In the next section, we will prove the following theorem. 

\begin{restatable}{theorem}{ThMain}\label{th:rand-expanding}
     For every tuple $(R_1,\dots,R_D)\in (0,1)^D$ there exists $\rho>0$ such that for each $n\in \NN$ a~tuple of codes $(\cC_1,\dots,\cC_D) \in \Gr_{2^t}(n,k_1)\times \cdots\times\Gr_{2^t}(n,k_D)$ picked uniformly at random, where $k_i \le n R_i$, $i\in [D]$, is $\rho$-product-expanding with probability at least $1- n^D 2^{n^D - t + 1}$, which tends to $1$ as $t\to\infty$.
\end{restatable}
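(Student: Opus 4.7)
The plan is to combine three ingredients, each foreshadowed in the introduction: (i) an \emph{existence} statement guaranteeing at least one collection with rates $(R_1,\dots,R_D)$ that is $\rho_0$-product-expanding for some $\rho_0>0$; (ii) a \emph{reformulation} of product expansion as a condition on the realizability of certain ``extendable'' subsets of the product of dual codes; and (iii) a \emph{genericity} lemma showing that a uniformly random tuple of subspaces over a sufficiently large field inherits every extendability pattern that is realizable at all, i.e., is \emph{maximally extendable} in the sense announced in the introduction. Put together, a random tuple is maximally extendable with probability tending to $1$ as $t\to\infty$, and any maximally extendable tuple inherits the $\rho_0$-product-expansion of the fixed witness.

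For the existence ingredient, I would exhibit a single tuple of rates $R_i$ that is $\rho_0$-product-expanding. The abstract and introduction point to locally testable codes as natural building blocks, and the paper's own auxiliary result on LTCs of arbitrary length with rate approaching $1$ gives the flexibility to arrange the rates. Combined with the monotonicity result (subspaces of product-expanding collections are still product-expanding, at a constant loss in $\rho$), it suffices to exhibit the witness at some convenient set of rates and then shrink to the target ones. For the reformulation ingredient, I would rewrite the inequality $\rho n\sum_i|a_i|_i\le|c|$ in contrapositive form: $\cC$ fails to be $\rho$-product-expanding iff there is a ``shape'' $(S;S_1,\dots,S_D)$, with $S\subseteq[n]^D$ and $S_i\subseteq\cL_i$ satisfying $|S|<\rho n\sum_i|S_i|$, which is realized by some codeword of $\cC_1\boxplus\cdots\boxplus\cC_D$ written in the form $c=\sum_i a_i$ with $\supp a_i$ consistent with $S_i$. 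This yields a finite catalogue of patterns that must be ruled out simultaneously, and the catalogue depends only on $n$, $D$, and~$\rho$, not on the codes themselves.

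For the genericity ingredient, I would fix a pattern and argue that if some reference tuple of the prescribed rates fails to realize it, then a uniformly random tuple in $\Gr_{2^t}(n,k_1)\times\cdots\times\Gr_{2^t}(n,k_D)$ also fails to realize it with probability $\ge 1-O(2^{-t})$, using that realizing the pattern imposes a nontrivial algebraic condition on the Pl\"ucker coordinates of the random subspaces (in the spirit of Schwartz--Zippel in the Grassmannian). A union bound over the roughly $2^{n^D}$ patterns, stratified by the value of $|S|\in\{0,1,\dots,n^D\}$, yields exactly the claimed probability $1-n^D 2^{n^D-t+1}$.

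The main obstacle I anticipate is the reformulation step: producing a notion of extendability that is simultaneously (i) sharp enough to encode the real-valued inequality $\rho\sum_i\|a_i\|_i\le\|c\|$ and (ii) generic enough that maximal extendability transfers it wholesale from the fixed witness tuple to the random one. This is the ``extendable subsets of the product of dual codes'' idea highlighted in the introduction, and encapsulating it correctly is where the conceptual content lies; the remaining steps---LTC-based witness, Schwartz--Zippel estimate on the Grassmannian, and the final union bound---are then largely mechanical.
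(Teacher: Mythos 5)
Your overall architecture coincides with the paper's: an LTC-based witness with rate adaptation, a transfer of expansion through a universal ``maximal extendability'' property of random codes, and a Schwartz--Zippel estimate with a union bound over the $\approx 2^{n^D}$ subsets of $[n]^D$. However, the step you yourself flag as ``the main obstacle'' --- encoding the real-valued inequality $\rho\sum_i\norm{a_i}_i\le\norm{c}$ in a form that maximal extendability can transfer --- is genuinely unresolved in your proposal, and your proposed encoding would not work. Your ``shape catalogue'' $(S;S_1,\dots,S_D)$ with $\abs{S}<\rho n\sum_i\abs{S_i}$ asks whether a specific weight pattern is \emph{realized by some codeword}; that is not a condition preserved by maximal extendability, which only guarantees that any \emph{support set} extendable (equivalently, inner-generated) for some tuple of the given dimensions is extendable for the random tuple. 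Realizability of a low-weight decomposition pattern is neither monotone in the extendable sets nor expressible as a rank condition on a fixed family of submatrices, so the genericity lemma does not apply to it, and the union bound over your catalogue does not produce the stated probability.

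The paper closes this gap with an intermediate, purely combinatorial-algebraic invariant: a set $M\subseteq[n]^D$ is \emph{inner-generated} for $\cC_1\boxplus\cdots\boxplus\cC_D$ if every codeword supported on $M$ is a sum of line-codewords along lines inside $M$, and one defines $\eps_{\max}$ as the largest $\eps$ such that every \emph{$\eps$-closed} set is inner-generated. Three lemmas then sandwich the expansion constant: $\rho$-expansion of the witness implies all $\rho$-closed sets are inner-generated (Lemma~\ref{lemma:inner-generated}); inner-generation is exactly extendability in the dual product (Proposition~\ref{pr:1}) and hence transfers wholesale to a maximally extendable random tuple; and conversely, if all $\eps$-closed sets are inner-generated then the collection is $\gamma(\eps,D)$-expanding, via the bound $\abs{[M]_\eps}\le c\abs{M}$ on the size of the $\eps$-closure (Lemmas~\ref{lemma:closure-size} and~\ref{lemma:ig-exp}). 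The price is a constant degradation from $\rho_0$ to $\gamma(\rho_0,D)=\rho_0^D/(D(2^D+1)^D)$, which is harmless for the theorem. Without this (or an equivalent) bridge, your plan reduces to restating the difficulty rather than proving the result; the remaining ingredients you list (LTC witness, Schwartz--Zippel over a polynomial parameterization of parity-check matrices, union bound) do match the paper's execution.
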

Using the union bound for a~collection of codes and their duals, we get the following result\footnote{In the special case $D=1$, we use here the fact that MDS codes with the required parameters exist for all $2^{t} > n$}.
\begin{corollary}
 For any intervals $I_1,\dots,I_D\subseteq (0,1)$, there exists $\rho>0$ such that for all sufficiently large ${n\in \NN}$ there exist codes $\cC_1,\dots,\cC_D\subseteq \F_{2^{t_n}}^n$, where ${t_n=(n+1)^D}$, such that $\frac{1}{n}\dim \cC_i\in I_i$ and  
 \[
 \rho(\cC_1,\dots,\cC_D)\ge\rho,\qquad \rho(\cC_1^\bot,\dots,\cC_D^\bot)\ge\rho.
 \]
\end{corollary}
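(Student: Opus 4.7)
The plan is to apply Theorem~\ref{th:rand-expanding} twice (once to the primal tuple and once to the dual tuple) and combine the estimates via a union bound. Fix rates $R_i$ in the interior of $I_i$ and set $k_i = \lfloor n R_i \rfloor$; for $n$ large enough this gives $k_i/n \in I_i$ and $k_i \le n R_i$, so Theorem~\ref{th:rand-expanding} applied to the rate tuple $(R_1,\dots,R_D)$ yields a constant $\rho_1>0$ such that a uniformly random $(\cC_i) \in \prod_{i} \Gr_{2^t}(n,k_i)$ is $\rho_1$-product-expanding with probability at least $1 - n^D 2^{n^D - t + 1}$.

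Next, I would observe that the duality map $\cC \mapsto \cC^\bot$ is a measure-preserving bijection $\Gr_{2^t}(n,k_i) \to \Gr_{2^t}(n, n-k_i)$, so $(\cC_i^\bot)$ is also uniform in its product Grassmannian. Pick any $R_i' \in (1 - R_i, 1)$; for $n$ sufficiently large, $(n - k_i)/n \le R_i'$, and Theorem~\ref{th:rand-expanding} applied to $(R_1', \dots, R_D')$ yields a constant $\rho_2 > 0$ with failure probability bounded by the same expression $n^D 2^{n^D - t + 1}$. The union bound then shows that both the primal and dual tuples are $\rho := \min(\rho_1, \rho_2)$-product-expanding simultaneously with probability at least $1 - 2 n^D 2^{n^D - t + 1}$. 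Substituting $t = t_n = (n+1)^D$ with $D \ge 2$ makes the exponent $n^D - (n+1)^D + 1 = -D n^{D-1} + O(n^{D-2})$ a negative polynomial in $n$ of degree $D-1$, so the failure probability decays to zero; in particular it is strictly less than $1$ for all sufficiently large $n$, guaranteeing the existence of the required codes.

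For the edge case $D = 1$ the union bound above is vacuous (the exponent collapses to $0$), so I would instead invoke the footnote's suggestion and use Reed-Solomon codes over $\F_{2^{t_n}}$ (valid because $2^{t_n} > n$): taking $\cC_1$ to be an MDS code of dimension $k_1 = \lfloor n R_1 \rfloor$ makes both $\cC_1$ and $\cC_1^\bot$ MDS with normalized minimum distances $1 - k_1/n + 1/n$ and $k_1/n + 1/n$ respectively. For a single code, product expansion reduces to normalized minimum distance (since $|\cL_1| = 1$ forces $\norm{x}_1 \in \{0,1\}$), so both factors are bounded below by a positive constant depending only on $R_1 \in I_1$. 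I do not foresee a substantive obstacle: the argument is a straightforward union-bound consequence of Theorem~\ref{th:rand-expanding} together with the fact that duality preserves uniform sampling on the Grassmannian. The only minor care needed is to choose the primal and dual rate tuples inside $(0,1)^D$ in a way that accommodates the integer rounding in $k_i = \lfloor n R_i \rfloor$, which is why one picks $R_i$ in the interior of $I_i$ and $R_i'$ strictly above $1 - R_i$.
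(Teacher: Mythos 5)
Your proposal is correct and matches the paper's intended argument, which is exactly the union bound over the primal and dual tuples (using that dualization is a measure-preserving bijection between Grassmannians) applied to Theorem~\ref{th:rand-expanding}, with the $D=1$ case handled separately via MDS codes as the paper's footnote indicates. The details you supply — choosing $R_i$ interior to $I_i$ to absorb the rounding, picking $R_i'>1-R_i$ for the dual rates, and checking that $t_n=(n+1)^D$ makes the failure probability vanish for $D\ge 2$ — are all the right ones.
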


In \cite[Appendix B]{kalachevTwosidedRobustlyTestable2023} it is also shown that the product expansion  naturally generalizes the coboundary expansion~\cite{Linial:2006,Gromov:2010} of the $(D-1)$-dimensional clique complex%
    \footnote{The \emph{clique complex} $X(\cG)$ of a~graph $\cG$ is the simplicial complex with the set of vertices $V(\cG)$ where $S\in  X(\cG)$ \Iff the set $S$ gives a~\emph{clique} in $\cG$, i.e., the vertices from $S$ are pairwise connected by edges in~$\cG$.} 
$X=X(K_{n,\dots,n})$ of the complete $D$-partite graph $K_{n,\dots,n}$. 
We represent this complex as a~poset 
$X = ([n]\cup\fbr{*})^D$ with the order relation
\[
(\sigma_1,\dots,\sigma_D) \!\le\!  (\tau_1,\dots,\tau_D)  \Leftrightarrow \forall i\in[D],\ \sigma_i = *\ \text{or}\ \sigma_i = \tau_i,
\]
and the map assigning the \emph{dimension} to each \emph{cell} $\sigma\in X$:
\[
\dim \sigma = \abs{\fbr{i\in[D]\mid \sigma_i\ne *}}-1.
\]
Let $X(i) = \{\sigma\in X \mid \dim \sigma = i\}$.
It is clear that the set of  $(D-1)$-dimensional cells $X(D-1)$ is the set $[n]^D$, while $X(D-2)$ can be  identified with the set of all (axis-parallel) lines $\bigcup_{i\in[D]}\cL_i$ in the grid $[n]^D$. 

Now given a~collection $\cC=(\cC_i)_{i\in [D]}$, we define for each $\sigma \in X$ the local code
\[
\cF_\sigma  = \{c \in \F_q^{X_\sigma} \mid c|_\ell \in \cC_i, \forall \ell \in \cL_i, \ell \subseteq X_\sigma, i\in [D] \}
\] 
on the subset of coordinates 
$X_\sigma = \{x \in [n]^D \mid \sigma \le x \}$. 
The local codes $\cF_\sigma$ and the natural restrictions $\cF_{\sigma\to\tau}\colon \cF_\sigma \to \cF_\tau$, $\sigma \le \tau$, given by $x \mapsto x|_{X_\tau}$, together define the structure~$\cF$ known in algebraic topology as a~\emph{cellular sheaf} of vector spaces on the complex $X$, where the vector spaces are the linear codes $\cF_\sigma$, $\sigma \in X$. 

Now given $\cF$, we 
consider the alternating sequence called a~\emph{cochain complex}
\[
\cdots\rightarrow\bC^{i-1} \xrightarrow{\delta^{i-1}}  \bC^i \xrightarrow{\delta^{i}}  \bC^{i+1} \rightarrow \cdots 
\]
of vector spaces $\bC^i = \bigoplus_{\sigma\in X(i)} \cF_\sigma$, $i\in\Z$, and $\F_q$-linear maps  
called \emph{coboundary maps}, defined\footnote{Since the natural restriction maps $\cF_{\sigma\to\tau}\colon x\mapsto x|_{X_\tau}$ satisfy the condition 
$\cF_{\tau\to\pi} \circ \cF_{\sigma\to\tau} = \cF_{\sigma\to\pi}$
and in our poset $X$ the number of $i$-cells between every $(i-1)$-cell and $(i+1)$-cell is either $0$ or $2$, it is straightforward to check the standard condition for cochain maps: $\delta^{i}\circ\delta^{i-1} = 0$, $i\in \Z$.} as
\[
\delta^i (c_{\sigma}\cdot\sigma) = \sum_{\substack{\sigma \le \tau\\\dim \tau = i+1}} (c_{\sigma}|_{X_\tau})\cdot\tau, 
\]
and extended by linearity, where the elements from $\bC^i$ are viewed as formal linear combinations $\sum_{\sigma\in X(i)} c_\sigma\cdot \sigma$ with coefficients $c_\sigma \in \cF_\sigma$. 

The \emph{$i$-th coboundary expansion $\eta^i(X;\cF)$ of $X$ with coefficients in $\cF$} is defined as 
\begin{equation}\label{eq:cobound}
	\eta^i(X;\cF) = \min_{c\in \bC^i\setminus \im \delta^{i-1}} \frac{\abs{\delta^i c}_X}{\min_{b\in \im \delta^{i-1}}\abs{c + b}_X},
\end{equation}
where $\abs{c}_X = \abs{\{\sigma\in X(i)\mid c_\sigma \ne 0\}}$, and we assume that the minimum over the empty set is equal to $\infty$. 	
It can be shown~\cite{kalachevTwosidedRobustlyTestable2023} that for $i=D-2$ in~\eqref{eq:cobound} the elements $c' = \delta^i c$ range over all non-zero elements of  $\cC_1\boxplus \dots\boxplus \cC_D$, and the minimum in the denominator  is taken over all possible values $\sum_{i\in [D]} \abs{a_i}_i$ such that $c' = \sum_{i\in[D]} a_i$ with $a_i\in \cC^{(i)}$.
Taking into account~\eqref{eq:prod-exp2}, this implies  
\[
\rho(\cC_1,\dots,\cC_D) = \frac{1}{n}\eta^{D-2}(X;\cF).
\]
When all the component codes $\cC_1,\dots,\cC_D$ are the repetition codes $\mathrm{Rep}_n = \{0\dots0,1\dots 1\}\subseteq \F_2^n$, then all local codes $\cF_\sigma$ are also repetition codes and $\cF_\sigma\cong \F_2$, $\sigma\in X$. Thus the constructed above cochain complex with coefficients in $\cF$ is isomorphic to the standard cochain complex over $\F_2$ for $X$ and the $i$-th coboundary expansion $\eta^i(X;\cF)$ coincides with the standard $i$-th coboundary expansion $h^i(X) = \eta^i(X;\F_2)$ (also known as \emph{$i$-th Cheeger constant}) of $X$ (e.g., see~\cite[Definition~2.9]{dotterrerCoboundaryExpanders2012}).
Thus the product expansion factor can be seen as a~natural generalization of the Cheeger constant of $X=X(K_{n,\dots,n})$ for the last non-trivial $i = D-2$, and it is shown in~\cite[Proposition~5.7]{dotterrerCoboundaryExpanders2012} that $\eta^{i}(X;\F_2) \ge n/(2^{D} - 1)$, which implies that:
\[ \rho(\mathrm{Rep}_n,\dots,\mathrm{Rep}_n) \ge \frac{1}{2^{D} - 1}. \]

\section{Proof of the Main Result}

\subsection{Locally Testable Codes}
There are several formal definitions of LTCs~\cite{Goldreich:2010}. Here, we adopt a~rather strong form of local testability (e.g., see~\cite{Leverrier:2021a}). We say that a~linear code~$\mathcal{C} \subseteq \mathbb{F}_q^n$ is {\emph{$(\Delta,s)$-locally testable}} if it has a~parity-check matrix $H\in \F_q^{m\times n}$ with rows of weight at most $\Delta$ such that for any vector $x \in \mathbb{F}_q^n$ we have
\[ \frac{1}{m} | H x | \geqslant \frac{s}{n} d(x, \mathcal{C}), \]
where $d(x, \mathcal{C}) := \min_{c\in\cC} d(x,c)$, and we denote by $d(\cdot,\cdot)$ and $|\cdot|$ the Hamming distance and the Hamming weight.
The parameters $\Delta$ and $s$ are positive real numbers called the~\emph{locality} and \emph{soundness}, respectively. 

The next theorem shows that one can construct good LTCs of arbitrary length from the good LTCs proposed by Dinur~et al.~\cite{Dinur:stoc2022}. We use the fact that the sequence of lengths in this family of good LTCs grows approximately as a~geometric progression, and show that any given code length can be obtained by duplicating codewords and padding with zeros.

\begin{theorem}\label{th:LTC}
    For every  $R\in (0,1)$ there exist constants $s>0$, $\Delta>0$, $\delta>0$ such that for each $n\in \NN$ there is  a~$(\Delta,s)$-locally testable $[n,k,d]$ code for some ${k\ge Rn}$, ${d\ge\delta n}$ defined by a~$\Delta$-limited parity-check matrix with $m$ rows, where\footnote{This additional technical condition is used later to simplify the proof of the main result.} $m\in (n/2,n]$.
\end{theorem}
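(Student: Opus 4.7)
The plan is to build the desired code for each $n$ by direct-summing and zero-padding the good LTCs of Dinur~et~al.~\cite{Dinur:stoc2022}. Their construction supplies a~family $(\cC_0^{(j)})_{j\ge 1}$ with $\cC_0^{(j)}\subseteq\F_2^{N_j}$ of a~fixed rate $R_0\in(R,1)$, constant relative distance $\delta_0$, and a~$(\Delta_0,s_0)$-testing parity-check matrix $H_0^{(j)}\in\F_2^{m_j\times N_j}$ with $m_j=(1-R_0)N_j$, whose lengths satisfy $N_{j+1}\le c_0 N_j$ for some absolute constant $c_0>1$.

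Given $n$, first I would pick the largest $j$ with $TN_j\le n$, where $T=\lceil R_0/(R_0-R)\rceil$, so that $t:=\lfloor n/N_j\rfloor$ satisfies $T\le t\le Tc_0$ and $r:=n-tN_j<N_j$; for the finitely many $n<TN_1$, any hand-crafted code does the job. The candidate code is
\[
\cC=\fbr{(c_1,\dots,c_t,0,\dots,0)\in\F_2^n : c_i\in\cC_0^{(j)}\ \forall i}
\]
with block-diagonal parity-check matrix $H'=\mathrm{diag}(H_0^{(j)},\dots,H_0^{(j)},I_r)$ having $m'=tm_j+r$ rows. Its dimension $tR_0 N_j\ge R_0(1-1/T)n\ge Rn$ and its distance $\delta_0 N_j\ge\delta_0 n/(Tc_0)$ are both linear in $n$; $H'$ is clearly $\Delta_0$-limited; and summing $|H_0^{(j)}x_i|\ge(s_0 m_j/N_j)\,d(x_i,\cC_0^{(j)})$ over the blocks together with the trivial $|y|\ge|y|$ on the padded coordinates gives a~soundness of order $s_0$ for $H'$.

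To force $m\in(n/2,n]$, if $m'>n/2$ I would simply take $H:=H'$ and $m:=m'$; otherwise I would replicate every row of $H'$ exactly $f:=\lfloor n/m'\rfloor$ times, so that $m:=fm'\in(n/2,n]$. Row duplication scales both $|Hx|$ and $m$ by the same factor $f$, leaving the soundness ratio $|Hx|/m$ unchanged. Row weights remain bounded by $\Delta_0$, while column weights grow by at most $f\le n/m'\le c_0/(1-R_0)$, which is a~constant depending only on $R$, yielding the required constant locality.

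The main obstacle is to balance the two conflicting demands on $N_j$: it must be small enough that $t\ge T$ (to guarantee rate $\ge R$) yet large enough that $N_j\ge n/(Tc_0)$ (to keep the relative distance $\Omega(1)$ and the duplication factor $f$ bounded). Both are satisfied simultaneously precisely because the sequence $\{N_j\}$ is geometrically dense, and the freedom to choose $R_0$ arbitrarily close to $1$ in the Dinur et al.\ family is what lets the approach cover every target rate $R\in(0,1)$.
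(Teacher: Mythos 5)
Your proposal is correct and follows essentially the same route as the paper: tile $n$ with $t$ copies of a Dinur et al.\ LTC whose length is a constant fraction of $n$ (possible because their lengths form a geometrically dense sequence), pad with zeros, and use the block-diagonal parity-check matrix with an identity block on the padded coordinates. The only real difference is how the condition $m\in(n/2,n]$ is secured --- the paper arranges it at the source by taking a slightly overcomplete local parity-check matrix inside the Dinur et al.\ construction, whereas you duplicate rows afterwards at the cost of a constant factor in column weight, which is equally valid.
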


\subsection{Extendable and Inner-generated Sets}\label{sec:ext-sets}

Given the grid $[n]^D$, we consider the set of \emph{axis-parallel lines} $\mathcal{L}(n,D):=\bigcup_{i=1}^D\mathcal{L}_i(n,D)$, where $\mathcal{L}_i(n,D)$ are the lines in the \emph{$i$-th direction}. If we have a~collection of codes $\cC_1,\dots, \cC_D\subseteq \F_q^n$, then to a~line $\ell\in\cL_i(n,D)$ we can assign the code 
\[
\cC_\ell:= \{c\in \F_q^{[n]^D} \mid  \supp c \subseteq \ell, c|_\ell\in \cC_i\}.
\]

Let $\cC_1,\dots,\cC_D\subseteq \F_q^n$, $M\subseteq [n]^D$. By $L(M)$ denote the set $\fbr{\ell\in\cL(n,D)\mid \ell\subseteq M}$ of all axis-parallel lines contained in~$M$. 
\begin{definition}[inner-generated sets]
A~set $M \subseteq [n]^D$ is called \emph{inner-generated} for a~code $\cC_1\boxplus\ldots\boxplus \cC_D$ if for each of its codewords $c$ with  $\supp c \subseteq M$ we have $c\in \sum_{\ell\in L(M)}{\cC_\ell}$, i.e., it can be represented as a~sum of codewords along the lines contained in~$M$.    
\end{definition}

\begin{figure*}[hbt]
    \centering
    \begin{tikzpicture}[
    scale=0.8,
    grid/.style={draw, black!70!white, very thin, step=1},
    setM/.style={fill=red, fill opacity=0.4},
    codeword/.style={fill=black!40!white},
    dualcodeword/.style={fill=blue!40!white},
    lline/.style={draw,very thick,black},
    lcircle/.style={black},
]

\begin{scope}[xshift=5cm,scale=0.7]
    \begin{scope}
        \fill[setM] (0,0) rectangle (3,1) (1,2) rectangle (3,3) (0,1) rectangle (1,2) (2,1) rectangle (3,2);
        \node[above] at (1.5, -1.2) {The set $M$};
        \node[above] at (1.5, -2.2)  {(a)};
        \draw[grid] (0,0) grid (3,3);
        
        \filldraw[lcircle] (0.5,1.5) circle (0.1cm);
        \filldraw[lcircle] (1.5,2.5) circle (0.1cm);
        \draw[lline] (0.2,0.5)--(2.8,0.5);
        \draw[lline] (2.5,0.2)--(2.5,2.8);
    \end{scope}

    \begin{scope}[xshift=6cm]
        \fill[codeword] (0,2) rectangle (3,3); 
        \node[above] at (1.5, -1.2) {$c_1 \in \mathcal{C}^{(1)}$};
        \draw[grid] (0,0) grid (3,3);
        \node at (3.75, 1.4) {$+$};
    \end{scope}
    
    \begin{scope}[xshift=10.5cm]
        \fill[codeword] (0,0) rectangle (1,3); 
        \node[above] at (1.5, -1.2) {$c_2 \in \mathcal{C}^{(2)}$};
        \draw[grid] (0,0) grid (3,3);
        \node[above] at (1.5, -2.2)  {(b)};
        \node at (3.75, 1.4) {$=$};
    \end{scope}
    
    \begin{scope}[xshift=15cm]
        
        \fill[codeword] (0,0) rectangle (1,2); 
        \fill[codeword] (1,2) rectangle (3,3); 
        \node[above] at (1.5, -1.2) {$c \in \cC_1\boxplus\cC_2$};
        \draw[grid] (0,0) grid (3,3);
        \filldraw[lcircle] (0.5,1.5) circle (0.1cm);
        \filldraw[lcircle] (1.5,2.5) circle (0.1cm);
        \draw[lline] (0.2,0.5)--(2.8,0.5);
        \draw[lline] (2.5,0.2)--(2.5,2.8);
    \end{scope}

    \begin{scope}[xshift=21cm]
        \fill[setM] (0,0) rectangle (3,1) (1,2) rectangle (3,3) (0,1) rectangle (1,2) (2,1) rectangle (3,2);
        
        \fill[dualcodeword] (1,2) rectangle (2,3); 
        \node[above] at (1.5, -1.2) {$w$};

        \draw[grid] (0,0) grid (3,3);
        \filldraw[lcircle] (0.5,1.5) circle (0.1cm);
        \filldraw[lcircle] (1.5,2.5) circle (0.1cm);
        \draw[lline] (0.2,0.5)--(2.8,0.5);
        \draw[lline] (2.5,0.2)--(2.5,2.8);
        \node[above] at (1.5, -2.2)  {(c)};
    \end{scope}
\end{scope}
\end{tikzpicture}
    \caption{(a) Example of a~set $M\in [3]^2$ that is not inner-generated for the code $\cC_1\boxplus \cC_2$ where $\cC_1=\cC_2$ are the repetition $[3,1,3]$ codes. The points $(1,2)$ and $(2,1)$ shown as black dots do not belong to the lines contained in $M$ shown as the black lines. Therefore the codeword $c\in \cC_1\boxplus \cC_2$ shown in Subfigure (b) with non-zero elements on $M$ cannot be represented as a~sum of codewords along the lines contained in $M$. Subfigure (c) illustrates the idea of Proposition~\ref{pr:1}: the word $w$ on $M$ with one non-zero bit (the blue cell) cannot be extended to a~codeword from the dual code $\cC_1^\perp\otimes\cC_2^\perp$ since it does not satisfy the check $c$ though satisfies all the checks along the lines in $M$, thus $M$ is not extendable in $\cC_1^\perp\otimes\cC_2^\perp$.}
    \label{fig:inner-gen}
\end{figure*}

Since $\cC_1\boxplus\ldots\boxplus \cC_D$ is the dual of the code 
\[\cC^\top := \cC_1^\perp\otimes\ldots\otimes \cC_D^\perp,\] we can view its codewords as checks for $\cC^\top$. With this interpretation in mind, the set of bit positions $M\subseteq [n]^D$ is inner-generated \Iff every check $z$ of $\cC^\top$  with $\supp z \subseteq M$ can be obtained as a~linear combination of the elements from $\cC_\ell$, $\ell\in L(M)$. The last condition is, in turn, equivalent to the condition that every local codeword $c_M \in \F_q^M$ satisfying only local checks $c_M|_\ell \in \cC_i^\perp$ for all $\ell\in L(M)\cap \cL_i(n, D)$ and $i\in [D]$, can be \emph{extended} to a~global codeword $c\in \cC^\top$ such that $c|_M = c_M$. This motivates us to give the following definition, which is dual to the notion of an~inner-generated set. 

\begin{definition}[extendable sets]
We call a~set $M$ \emph{extendable} for a~code $\cC_1\otimes\ldots\otimes \cC_D$ if every local codeword $c_M \in \F_q^M$ satisfying $c_M|_\ell \in \cC_i$ for all $\ell\in L(M)\cap \cL_i(n, D)$ and $i\in [D]$ (i.e., the local checks for $\ell \in L(M)$), can be extended to a~global codeword $c\in \cC_1\otimes\ldots\otimes \cC_D$ such that $c|_M = c_M$.    
\end{definition}

Thus, we have the following fact:
\begin{proposition}\label{pr:1}
    A~set $M\subseteq [n]^D$ is inner-generated for a~code $\cC_1\boxplus\ldots\boxplus \cC_D$ \Iff $M$ is extendable in $\cC_1^\perp\otimes\ldots\otimes \cC_D^\perp$.
\end{proposition}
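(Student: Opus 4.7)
The plan is to recognize this proposition as a purely linear-algebraic duality statement taking place inside $\F_q^M$. The central identity I would invoke is the standard restriction-dual formula: if $V_M\subseteq \F_q^{[n]^D}$ denotes the subspace of vectors supported on $M$, identified with $\F_q^M$ via restriction, then for every linear subspace $U\subseteq \F_q^{[n]^D}$ the restriction $U|_M\subseteq \F_q^M$ and the set $U^\perp\cap V_M$ (viewed in $\F_q^M$) are orthogonal complements of each other. This is because the restriction map $\pi\colon \F_q^{[n]^D}\to \F_q^M$ and the extension-by-zero $\iota\colon \F_q^M\hookrightarrow \F_q^{[n]^D}$ are mutually dual under the standard inner products.

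First I would translate both properties into containments inside $\F_q^M$. For each $\ell\in L(M)\cap\cL_i$ let $\cC_\ell^{\mathrm{loc}}\subseteq\F_q^M$ be the code of $c_M$ supported on $\ell$ with $c_M|_\ell\in\cC_i$, and set $W_M(\cC):=\sum_{\ell\in L(M)}\cC_\ell^{\mathrm{loc}}$; let $\cC_M^{\mathrm{loc}}(\cC)$ consist of all $c_M\in\F_q^M$ satisfying $c_M|_\ell\in\cC_i$ for every $\ell\in L(M)\cap\cL_i$. Unwinding the definitions, $M$ is inner-generated for $\cC_1\boxplus\dots\boxplus\cC_D$ if and only if $\{c\in\cC_1\boxplus\dots\boxplus\cC_D\mid \supp c\subseteq M\}\subseteq W_M(\cC)$, and $M$ is extendable for $\cC_1^\perp\otimes\dots\otimes\cC_D^\perp$ if and only if $\cC_M^{\mathrm{loc}}(\cC^\perp)\subseteq (\cC_1^\perp\otimes\dots\otimes\cC_D^\perp)|_M$. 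In each case the reverse containment is automatic: the inclusions $\cC_\ell\subseteq\cC^{(i)}\subseteq\cC_1\boxplus\dots\boxplus\cC_D$ give the first, and restricting a global codeword clearly preserves the line-local checks.

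The main calculation is the dual-of-an-intersection identity $(\cC_M^{\mathrm{loc}}(\cC))^{\perp}=W_M(\cC^\perp)$ taken inside $\F_q^M$. This holds because $\cC_M^{\mathrm{loc}}(\cC)$ is the intersection over $\ell\in L(M)$ of the subspaces $\{c_M\in\F_q^M\mid c_M|_\ell\in\cC_i\}$, whose individual duals in $\F_q^M$ are precisely the spaces $\cC_\ell^{\mathrm{loc}}(\cC^\perp)$, and the dual of an intersection is the sum of the duals. Combining this with the restriction-dual formula applied to $U=\cC_1^\perp\otimes\dots\otimes\cC_D^\perp$, whose global dual is $\cC_1\boxplus\dots\boxplus\cC_D$, one sees that the two containments above are $\perp$-duals of one another inside $\F_q^M$, hence equivalent.

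I do not anticipate a real obstacle; the proposition is essentially a tautology once the correct ambient space for each orthogonal complement is tracked. The only delicate point is keeping the two ``directions'' straight—$\cC$ versus $\cC^\perp$, $\boxplus$ versus $\otimes$, and sums versus intersections—all of which get swapped simultaneously under the duality.
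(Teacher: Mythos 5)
Your proposal is correct and is essentially the same duality argument the paper gives (informally, in the paragraph preceding the proposition): codewords of $\cC_1\boxplus\dots\boxplus\cC_D$ supported on $M$ are the checks of $\cC_1^\perp\otimes\dots\otimes\cC_D^\perp$ vanishing outside $M$, and being spanned by the line codes $\cC_\ell$, $\ell\in L(M)$, is dual to extendability of the local code on $M$. You merely make explicit the two standard facts the paper leaves implicit, namely the shortening/puncturing duality $\bigl(U|_M\bigr)^\perp = U^\perp\cap \F_q^M$ and that the dual of an intersection is the sum of the duals; the argument is complete.
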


\begin{definition}[maximally extendable product code]\label{def:ME}
We say that a~product code $\cC=\bigotimes_{i\in [D]} \cC_i \subseteq \F_{2^t}^{[n]^D}$ is \emph{maximally extendable} if for every other product code 
\[\cC' = \bigotimes_{i\in [D]} \cC'_i \subseteq \F_{2^{t'}}^{[n]^D}\] with $\dim \cC_i = \dim \cC_i'$, $i\in[D]$, when $M$ is extendable in $\cC'$, it is also extendable in $\cC$. 
\end{definition}

\begin{remark}
    The idea of maximally extendable codes is similar to that of \emph{maximally recoverable} (MR) codes~\cite{chenMaximallyRecoverableProperty2007, gopalanExplicitMaximallyRecoverable2014}, where in Definition~\ref{def:ME} one replaces extendable sets by information sets\footnote{An~\emph{information set} of a~linear code is a~set $M$ of coordinate positions whose symbols can be chosen freely and which then uniquely determine the entire codeword, i.e., the rest of the codeword can be \emph{recovered} from them.}. However, there is a~technical difference between these two concepts. Clearly, any information set of a~product code is extendable and does not contain any line, hence in a~maximally extendable code this set is also an~information set; therefore, a~maximally extendable code has all information sets that product codes with the same rates of component codes can have; thus it is also MR. However, it is an~open question whether any MR product code is maximally extendable. Although from any set of lines it is possible to choose an~information set for the local code defined by constraints along these lines, MR property does not guarantee that this set will be a~subset of an~information set of the global code defined by \emph{all} constraints.
\end{remark}

\subsection{Proof Outline of Theorem~\ref{th:rand-expanding}}\label{sec:outline}

It is known~\cite{kalachevTwosidedRobustlyTestable2023} that the expansion of $\cC_1 \otimes \cC_2$ is almost the same as its robust testability, which, in turn, is much easier to prove for the product of a~good code $\cC_1$ and an~expander LDPC code $\cC_2$ (originally, this result was formulated in terms of smooth codes \cite{Ben-Sasson:2006}). But when we need to prove this property for both $\cC_1 \otimes \cC_2$ and $\cC_1^\bot \otimes \cC_2^\bot$, which is required in many existing constructions of good qLDPC codes, we cannot use LDPC codes because their dual codes have small minimum distance and, hence, $\cC_1^\bot \otimes \cC_2^\bot$ does not have good expansion. To build such pairs of codes, the most natural approach is to prove that pairs of random codes are product-expanding; this approach was used in \cite{Panteleev&Kalachev:stoc2022,Dinur:decoders,kalachevTwosidedRobustlyTestable2023}. The existing proofs for pairs of random codes are considerably more complicated than the proof for the case where one of the codes is a~smooth code.

For $D > 2$, proving the good product expansion of random codes is even more challenging. The ideas used in the proof for $D=2$ cannot be generalized in a~straightforward manner for the case $D>2$. 
And even for a~collection of expander LDPC codes, the proof for $D=2$ cannot be generalized directly. 
However, the idea of using codes with some ``smooth'' properties can be applied to the collection of LTCs. 
The important property of LTCs that indirectly helps here is that, loosely speaking, if $\cC_1,\dots,\cC_D$ are LTCs, then the code $\cC:=\cC_1\boxplus\dots\boxplus\cC_D$ is also an~LTC. 
This property allows us to estimate the distance from a~vector to the code $\cC$ by the Hamming weight of the syndrome. 
It also helps to prove that the collection of good LTCs is product-expanding (Lemmas~\ref{lemma:prod-ltc}, \ref{lemma:LTC-exp}). In addition, some form of rate adaptation is required to obtain examples of product-expanding collections of codes with arbitrary rates (Lemma \ref{lemma:prodexp-exists}).

However, the duals of LTCs are codes with small distances, so the tuple of dual codes of these LTCs cannot have good product expansion. To address this, we give a criterion for good product expansion in terms of extendable sets in the product of dual codes (Proposition~\ref{pr:1} and Lemmas~\ref{lemma:inner-generated}, \ref{lemma:closure-size}, \ref{lemma:ig-exp}) and introduce the universal property of a~collection of codes that inherits all extendable sets from all other collections\footnote{It is similar to the idea of maximally recoverable (MR) codes~~\cite{chenMaximallyRecoverableProperty2007, gopalanExplicitMaximallyRecoverable2014}.} (Definition~\ref{def:ME}).
If a~collection of codes has this universal property, then it inherits the product expansion from a~collection of LTCs with the same parameters (Lemma \ref{lemma:universal-exp}).
Then, using the Schwartz-Zippel Lemma, we prove that a~random collection of codes over a~large enough field has this universal property with high probability (Lemma \ref{lemma:rand-universal}). 
Hence, if the field is large enough, then, with high probability, a~collection of codes has good product expansion.

However, to implement this plan, we need to have LTCs of arbitrary length $n$ and dimension $k$. The existing constructions of LTCs \cite{Dinur:stoc2022,Panteleev&Kalachev:stoc2022,Leverrier:focs2022,Lin2022:losslessLTC} can give LTCs only for some specific values of $n$ and $k$, so we need to perform some rate adaptation. To do this, we first prove that we can construct good LTCs of arbitrary length\footnote{It is still unknown how to construct LTCs with arbitrary dimension.} and rate bounded from below (Theorem \ref{th:LTC}). 
The second step is to prove that if we reduce the dimension of the codes in a~collection by taking subspaces, the product expansion does not become much worse (Lemma~\ref{lemma:subcode-exp}). In this way, we adjust the dimension of the codes without changing the length, which allows us to obtain Lemma~\ref{lemma:prodexp-exists} that states collections of product-expanding codes exist for arbitrary lengths and arbitrary dimensions of the codes.

We believe that LDPC codes derived from lossless expanders (e.g., random constructions) could potentially replace LTCs in our proof. However, proving product expansion for such codes is more challenging than for LTCs. While this alternative strategy avoids rate adaptation and may yield better constants, it remains an~open direction for future work.

\subsection{Operations Preserving Product Expansion}

Before proceeding to the main steps of our proof, as outlined in Section \ref{sec:outline}, we first establish a~few auxiliary results. A~key part of our strategy involves showing the existence of product-expanding codes for arbitrary rates. To achieve this, we will start with a~known construction of LTCs and then take subcodes to adjust the rates. The following lemmas provide the necessary tools, ensuring that the expansion property does not degrade too much in this process.

In this section, we will use the following auxiliary results from \cite{kalachevTwosidedRobustlyTestable2023}.
\begin{extlemma}[Lemma 6 in \cite{kalachevTwosidedRobustlyTestable2023}]\label{lemma:intersection}
    For linear codes $\cC_1,\cC_2,X,Y\subseteq \F_q^n$ we have: 
    $$(X\otimes Y)\cap (\cC_1\boxplus \cC_2)=(X\cap \cC_1)\otimes Y+X\otimes (Y\cap \cC_2).$$
\end{extlemma}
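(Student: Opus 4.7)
The plan is to realize both sides as natural objects associated with the quotient maps $\F_q^n\twoheadrightarrow \F_q^n/\cC_1$ and $\F_q^n\twoheadrightarrow \F_q^n/\cC_2$, and then invoke the exactness of the tensor product over a field. The inclusion ``$\supseteq$'' will be immediate from the definition $\cC_1\boxplus \cC_2 = \cC_1\otimes \F_q^n + \F_q^n\otimes \cC_2$: clearly $(X\cap \cC_1)\otimes Y\subseteq (X\otimes Y)\cap (\cC_1\otimes \F_q^n)$ and symmetrically for the other summand, so the entire content of the lemma lies in the reverse inclusion.

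For ``$\subseteq$'', I would first observe that $\cC_1\boxplus\cC_2$ is precisely the kernel of the surjection $\pi\colon \F_q^n\otimes \F_q^n \twoheadrightarrow (\F_q^n/\cC_1)\otimes(\F_q^n/\cC_2)$ obtained by tensoring the two quotient maps, so that $(X\otimes Y)\cap (\cC_1\boxplus \cC_2)$ is the kernel of the restriction $\pi|_{X\otimes Y}$. Next, setting $\bar X := X/(X\cap \cC_1)$ and $\bar Y := Y/(Y\cap \cC_2)$, I would factor this restriction as
\[
X\otimes Y \twoheadrightarrow \bar X\otimes \bar Y \hookrightarrow (\F_q^n/\cC_1)\otimes(\F_q^n/\cC_2),
\]
where the first arrow is the tensor of the canonical surjections $X\twoheadrightarrow \bar X$, $Y\twoheadrightarrow \bar Y$, and the second arrow is the tensor of the canonical injections $\bar X\hookrightarrow \F_q^n/\cC_1$, $\bar Y\hookrightarrow \F_q^n/\cC_2$.

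The one step with any real substance is the injectivity of the second arrow, which relies on the fact that tensoring two injective $\F_q$-linear maps over a field produces an injection (easily verified by choosing complements in each factor). Granted this, the kernel of the composite equals the kernel of the first arrow, and by the standard formula for the kernel of a tensor product of surjections this equals $\ker(X\twoheadrightarrow \bar X)\otimes Y + X\otimes \ker(Y\twoheadrightarrow \bar Y) = (X\cap \cC_1)\otimes Y + X\otimes (Y\cap \cC_2)$, matching the right-hand side. I do not anticipate any genuine obstacle: the whole argument is a short diagram chase once the factorisation through $\bar X\otimes \bar Y$ is written down, with exactness of tensoring over $\F_q$ being the only nontrivial ingredient.
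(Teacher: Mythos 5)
Your argument is correct and complete. Note that the paper you are being compared against does not actually prove this statement: it is imported verbatim as an external lemma (Lemma~6 of the cited reference \cite{kalachevTwosidedRobustlyTestable2023}), so there is no in-paper proof to measure you against. On its own merits, your proof is sound: the identification $\cC_1\boxplus\cC_2=\ker(\pi_1\otimes\pi_2)$ follows from the standard kernel formula $\ker(f\otimes g)=\ker f\otimes W+V\otimes\ker g$ for linear maps of $\F_q$-vector spaces, the factorisation of $\pi|_{X\otimes Y}$ through $\bar X\otimes\bar Y$ is just functoriality of $\otimes$, and the injectivity of $\bar X\otimes\bar Y\hookrightarrow(\F_q^n/\cC_1)\otimes(\F_q^n/\cC_2)$ holds because every vector space is flat (concretely, by choosing complements). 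Applying the kernel formula once more to $X\twoheadrightarrow\bar X$ and $Y\twoheadrightarrow\bar Y$ gives exactly $(X\cap\cC_1)\otimes Y+X\otimes(Y\cap\cC_2)$. The only stylistic difference from the elementary proofs one usually sees for this identity (which pick explicit complements of $X\cap\cC_1$ in $X$ and of $Y\cap\cC_2$ in $Y$ and decompose $X\otimes Y$ into four direct summands) is that you package the complement-choosing into the two ``standard facts'' about tensoring injections and surjections; the underlying mechanism is the same, but your phrasing makes the $\supseteq$ and $\subseteq$ directions fall out of a single diagram chase.
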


\begin{extlemma}[Lemma 11 in \cite{kalachevTwosidedRobustlyTestable2023}]\label{lemma:subset-exp}
Let $\cC=(\cC_i)_{i\in [D]}$ be a~$\rho$-product-expanding collection of codes $\cC_i\subsetneq \F_q^{n_i}$. Then each subcollection $\cC_I=(\cC_i)_{i\in I}$, $I\subseteq [D]$, is also $\rho$-product-expanding.
\end{extlemma}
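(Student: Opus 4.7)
The plan is to argue via a lift-and-project scheme. Without loss of generality assume $I=[D']$, set $\bar I=[D]\setminus I$, and fix $c'\in\cC_1\boxplus\dots\boxplus\cC_{D'}\subseteq\F_q^{[n]^{D'}}$ to be decomposed. The idea is to embed $c'$ into a codeword $c$ of the full boxplus $\cC_1\boxplus\dots\boxplus\cC_D$, apply the $\rho$-product-expansion hypothesis to $c$, and then project the resulting decomposition back to $\F_q^{[n]^{D'}}$, arranging the projection so that all components indexed by $j\notin I$ vanish.

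For each $j\in\bar I$ the non-degeneracy $\cC_j\subsetneq\F_q^n$ provides a nonzero $w_j^*\in\cC_j^\perp$ together with a coordinate $x_j^*\in\supp w_j^*$; set $\lambda=\prod_{j\in\bar I}w_j^*(x_j^*)\ne 0$. Take the lift
\[
c \,:=\, c'\otimes\bigotimes_{j\in\bar I}e_{x_j^*}\in\F_q^{[n]^D},
\]
so that $c(x_I,x_{\bar I})=c'(x_I)$ when $x_{\bar I}=x_{\bar I}^*$ and $c$ vanishes otherwise. Expanding an arbitrary decomposition $c'=\sum_{i\in I}a'_i$ with $a'_i\in\cC_I^{(i)}$ yields $c=\sum_{i\in I}\bigl(a'_i\otimes\bigotimes_{j}e_{x_j^*}\bigr)\in\sum_{i\in I}\cC^{(i)}\subseteq\cC_1\boxplus\dots\boxplus\cC_D$, so the hypothesis supplies a decomposition $c=\sum_{i\in[D]}b_i$ with $b_i\in\cC^{(i)}$ and $\rho\sum_i\|b_i\|_i\le\|c\|$.

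Define $\tilde\psi\colon\F_q^{[n]^D}\to\F_q^{[n]^{D'}}$ by $\tilde\psi(b)(x_I)=\sum_{x_{\bar I}}\bigl(\prod_{j\in\bar I}w_j^*(x_j)\bigr)\,b(x_I,x_{\bar I})$ and put $a'_i:=\lambda^{-1}\tilde\psi(b_i)$. For $i\in I$, the restriction of $\tilde\psi(b_i)$ to any $i$-line of $[n]^{D'}$ is a linear combination of restrictions of $b_i$ to parallel $i$-lines in $[n]^D$, each of which lies in $\cC_i$, so $a'_i\in\cC_I^{(i)}$. For $j\in\bar I$, fixing every outer coordinate except $x_j$ collapses the inner sum to $\langle w_j^*,b_j|_\ell\rangle$ for some $j$-line $\ell$; this vanishes because $b_j|_\ell\in\cC_j$ and $w_j^*\in\cC_j^\perp$, so $\tilde\psi(b_j)=0$. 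Since the support of $c$ also reduces the outer sum to a single term, $\tilde\psi(c)=\lambda c'$, and therefore $c'=\sum_{i\in I}a'_i$.

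For the weight estimate, if $a'_i|_{\ell'}\ne 0$ for an $i$-line $\ell'$ of $[n]^{D'}$ through $x_I^0$, then its defining linear combination contains a nonzero summand, so some $x_{\bar I}^0\in\prod_{j\in\bar I}\supp w_j^*$ produces a nonzero restriction of $b_i$ along the $i$-line $\ell$ of $[n]^D$ through $(x_I^0,x_{\bar I}^0)$. The resulting map $\ell'\mapsto\ell$ is injective (distinct $\ell'$ have distinct $x_I^0$), so $|a'_i|_i\le|b_i|_i$ and hence $\|a'_i\|_i\le n^{D-D'}\|b_i\|_i$. Combined with $\|c\|=|c'|/n^D=\|c'\|/n^{D-D'}$ this gives $\rho\sum_{i\in I}\|a'_i\|_i\le n^{D-D'}\rho\sum_{i\in[D]}\|b_i\|_i\le n^{D-D'}\|c\|=\|c'\|$, finishing the proof. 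The main technical point is the joint choice of the lift and projector: the dual vectors $w_j^*\in\cC_j^\perp$ are exactly what kills the unwanted $b_j$ terms with $j\notin I$, while the support condition $x_j^*\in\supp w_j^*$ guarantees that $\tilde\psi$ does not annihilate $c$ itself.
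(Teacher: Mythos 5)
Your proof is correct. Note that the paper itself does not prove this statement: it is imported verbatim as Lemma~11 of the cited reference \cite{kalachevTwosidedRobustlyTestable2023}, so there is no in-paper argument to compare against. Your lift-and-contract scheme is sound and, as far as I can tell, is essentially the standard route: the lift $c=c'\otimes\bigotimes_{j\in\bar I}e_{x_j^*}$ lands in $\sum_{i\in I}\cC^{(i)}$ because any decomposition of $c'$ tensors through; the contraction $\tilde\psi$ against the dual vectors $w_j^*$ kills every $b_j$ with $j\in\bar I$ (pairing $w_j^*\in\cC_j^\perp$ against $b_j|_\ell\in\cC_j$ line by line) while preserving membership $a_i'\in\cC_I^{(i)}$ for $i\in I$ (each line restriction is a linear combination of $\cC_i$-codewords); and the injective assignment of $i$-lines gives $|a_i'|_i\le|b_i|_i$, which together with $\norm{a_i'}_i\le n^{D-D'}\norm{b_i}_i$ and $\norm{c}=\norm{c'}/n^{D-D'}$ yields exactly $\rho\sum_{i\in I}\norm{a_i'}_i\le\norm{c'}$ with no loss in $\rho$. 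You also correctly isolate where the non-degeneracy hypothesis $\cC_j\subsetneq\F_q^{n}$ enters --- it is needed precisely to produce the nonzero $w_j^*$ and the support point $x_j^*$, without which the contraction would either not exist or annihilate $c$ itself; this is why the lemma is stated only for non-degenerate collections.
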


\begin{lemma}\label{lemma:subcode-exp}
Consider codes $C_1,\dots,C_D\subseteq\F_q^n$, and a~subcode $C_1' \subseteq C_1$. Then 
\[\rho(C_1',C_2,\dots,C_D) \ge \frac{\rho(C_1,C_2,\dots,C_D)}{1+\rho(C_2,\dots,C_D)^{-1}}
.\]
\end{lemma}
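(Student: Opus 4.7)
Write $\rho:=\rho(C_1,\dots,C_D)$, $\rho':=\rho(C_2,\dots,C_D)$, and let $C^{(i)}:=\F_q^n\otimes\dots\otimes C_i\otimes\dots\otimes\F_q^n$ denote the direction-$i$ code of the collection $(C_1,\dots,C_D)$. The plan is to start from a decomposition $c=\sum_{i\in[D]}a_i$ with $a_i\in C^{(i)}$ and $\rho\sum_i\norm{a_i}_i\le\norm{c}$, granted by the product expansion of $(C_1,\dots,C_D)$ (noting $c\in C_1\boxplus\dots\boxplus C_D$), and then to post-process it: subtract a carefully chosen correction $e$ from $a_1$ so that the new $a_1'$ lands in $(C_1')^{(1)}$, and redistribute $e$ across $a_2,\dots,a_D$ using the product expansion of $(C_2,\dots,C_D)$ applied to hyperplane slices.

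First I would pin down the space in which $e$ must live. Viewing $[n]^D=[n]\times[n]^{D-1}$ and writing $\tilde C:=C_2\boxplus\dots\boxplus C_D$ on $[n]^{D-1}$, one checks that $C^{(2)}+\dots+C^{(D)}=\F_q^n\otimes\tilde C$. Let $\pi_1\colon\F_q^{[n]^D}\to(\F_q^n/C_1')\otimes\F_q^{[n]^{D-1}}$ be column-wise reduction modulo $(C_1')^{(1)}$. Since $c\in(C_1')^{(1)}+\F_q^n\otimes\tilde C$, one has $\pi_1(c)\in(\F_q^n/C_1')\otimes\tilde C$; since also $\sum_{i\ge 2}a_i\in\F_q^n\otimes\tilde C$, subtraction yields $\pi_1(a_1)\in(\F_q^n/C_1')\otimes\tilde C$. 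Intersecting with the tautological $\pi_1(a_1)\in(C_1/C_1')\otimes\F_q^{[n]^{D-1}}$ and invoking the standard tensor-intersection identity (in the spirit of Lemma~\ref{lemma:intersection}) gives $\bar a_1:=\pi_1(a_1)\in(C_1/C_1')\otimes\tilde C$.

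Next, fix any linear section $L\colon C_1/C_1'\to C_1$ and set $e:=(L\otimes\mathrm{id})(\bar a_1)\in C_1\otimes\tilde C$, so that column-wise $e(\cdot,y)=L(\bar a_1(\cdot,y))$, which vanishes whenever $a_1(\cdot,y)\in C_1'$. Define $a_1':=a_1-e$: then $\pi_1(a_1')=0$ together with $a_1'\in C^{(1)}$ forces $a_1'\in(C_1')^{(1)}$, while $|a_1'|_1\le|a_1|_1$ because columns of $a_1$ that are zero stay zero in $a_1'$, and $|e|\le n|a_1|_1$ yields $\norm{e}\le\norm{a_1}_1$. Since $e\in\F_q^n\otimes\tilde C$, each hyperplane slice $e|_{\{j\}\times[n]^{D-1}}$ lies in $\tilde C$; applying the product expansion of $(C_2,\dots,C_D)$ to each slice and gluing along $j\in[n]$ produces $e=\sum_{i\ge 2}e_i$ with $e_i\in C^{(i)}$ and $\sum_{i\ge 2}\norm{e_i}_i\le\norm{e}/\rho'$ (the $1/n$ between the $(D{-}1)$- and $D$-dimensional normalisations cancels the sum over the $n$ slices).

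Setting $a_i':=a_i+e_i$ for $i\ge 2$ gives a valid $(C_1',C_2,\dots,C_D)$-decomposition $c=a_1'+\sum_{i\ge 2}a_i'$, and
\[
\sum_i\norm{a_i'}_i\le\norm{a_1}_1+\sum_{i\ge 2}\norm{a_i}_i+\frac{\norm{e}}{\rho'}\le\frac{\norm{c}}{\rho}+\frac{\norm{a_1}_1}{\rho'}\le\bigl(1+\rho'^{-1}\bigr)\frac{\norm{c}}{\rho},
\]
which by the definition of $\rho(C_1',C_2,\dots,C_D)$ gives the claimed inequality. The main technical point will be the second paragraph: recognising that the minimal correction $e$ actually lives inside the tensor-product subspace $C_1\otimes\tilde C$, which is precisely what allows one simultaneously to control $|e|$ and $\norm{e}$ via the column structure (using the section $L$) and to redistribute $e$ among $a_2',\dots,a_D'$ by means of the $(D{-}1)$-dimensional product expansion of $(C_2,\dots,C_D)$ applied slice-by-slice.
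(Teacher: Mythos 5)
Your proposal is correct and follows essentially the same route as the paper's proof: the paper splits $a_1$ using a complement $A$ of $C_1'$ in $C_1$ and applies its intersection lemma to show the correction term lies in $A\otimes(C_2\boxplus\dots\boxplus C_D)$, which is exactly your quotient-and-section construction of $e\in C_1\otimes\tilde C$, followed by the same slice-wise application of $\rho(C_2,\dots,C_D)$ and the same final weight estimate.
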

\begin{proof}
    Let $\rho:=\rho(C_1,\dots,C_D)$, $\rho_r:=\rho(C_2,\dots,C_D)$.  Consider $C_r := C_2\boxplus\cdots\boxplus C_D$. Denote the length of the code $C_r$ as $n_r=n^{D-1}$. Consider an arbitrary codeword $x \in C_1' \boxplus C_r$. Since $x \in C_1 \boxplus C_r$, there exist words $x_i \in C^{(i)}$, such that $\rho\sum |x_i|_i n \le |x|$. 

    Let $A \subseteq \F_q^{n}$ be such that $A \cap C_1' = \{0\}$, $A + C_1' = C_1$. Then $x_1 = x_1' + x_A$ for some $x_1' \in C_1' \otimes \F_q^{n_r}$ and $x_A \in A \otimes \F_q^{n_r}$. Since $x, x'_1, x_j\in C'_1\boxplus C_r$ for $2\le j\le D$, we have \[
        x_1=x-\sum_{j=2}^Dx_j\in C'_1\boxplus C_r,\quad x_A=x_1-x'_1\in C'_1\boxplus C_r,
    \]
    which means $x_A \in (A \otimes \F_q^{n_r}) \cap (C_1' \boxplus C_r)$. By Lemma \ref{lemma:intersection} we have
    \begin{align*}
    (A \otimes \F_q^{n_r}) \cap (C_1' \boxplus C_r) 
    &= (A \cap C_1') \otimes \F_q^{n_r}\! +\! A \otimes (\F_q^{n_r} \cap C_r) \\
    &= A \otimes C_r \subseteq \F_q^{n} \otimes C_r.
    \end{align*}
    Therefore, we get a~decomposition $x_A = x_2' + \cdots + x_D'$ with $x_2' \in C^{(2)}, \dots, x_D' \in C^{(D)}$ such that \[
        \rho_r \sum_{i=2}^D |x_i'|_i n \le |x_A| \le |x_A|_1 n \le |x_1|_1 n.
    \] 
    Put $c_1 := x_1'$, $c_i := x_i + x_i'$ for $i=2,\dots,D$. Then $c_1 \in C'^{(1)}$, $c_i \in C^{(i)}$ for $i=2,\dots,D$ and
    \(x = \sum_{i=1}^D c_i\). Furthermore, 
    \begin{align*}
    \sum_{i=1}^D |c_i|_i n &\le \sum_{i=2}^D |x_i'|_i n + |x_1'|_1 n + \sum_{i=2}^D |x_i|_i n\\
    &\le \frac{1}{\rho_r}|x_1|_1n + \sum_{i=1}^D |x_i|_i n\\
    &\le \frac{1}{\rho_r \rho}|x| + \frac{1}{\rho}|x| = \frac{1 + \frac{1}{\rho_r}}{\rho}|x|.
    \end{align*}
    Hence, $\rho(C_1',C_2,\dots,C_D) \ge \frac{\rho}{1+\frac{1}{\rho_r}}$, which completes the proof. 
\end{proof}

\begin{corollary}\label{sled:subcodes}
Let $C_1,\dots,C_D\subseteq \F_q^n$ be linear codes, and $C_i' \subseteq C_i$ for $i \in [D]$. Then 
\[\rho(C_1',C_2',\dots,C_D') \ge 2^{-D}\bigrbr{\rho(C_1,C_2,\dots,C_D)}^{2^D}.\]    
\end{corollary}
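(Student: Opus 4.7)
The plan is a mechanical iteration of Lemma~\ref{lemma:subcode-exp}, replacing one component code at a time and tracking how the product expansion decays.

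Set $\rho_k := \rho(C_1',\dots,C_k',C_{k+1},\dots,C_D)$, so that $\rho_0 = \rho(C_1,\dots,C_D)$ on the right-hand side of the statement and $\rho_D$ is the quantity to be bounded from below. Applying Lemma~\ref{lemma:subcode-exp} at the $(k+1)$-st step, after reindexing the codes so that $C_{k+1}$ plays the role of ``$C_1$'' in that lemma, gives
\[
\rho_{k+1} \;\ge\; \frac{\rho_k}{1 + 1/\rho_k^{\mathrm{sub}}},
\]
where $\rho_k^{\mathrm{sub}}$ denotes the product expansion of the same intermediate collection with $C_{k+1}$ removed. Since this is a subcollection of a $\rho_k$-product-expanding collection, Extended Lemma~\ref{lemma:subset-exp} yields $\rho_k^{\mathrm{sub}} \ge \rho_k$, so the recursion simplifies to
\[
\rho_{k+1} \;\ge\; \frac{\rho_k^2}{1+\rho_k}.
\]

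The interesting regime is $\rho_0 \le 1$: for $\rho_0 > 1$ the claimed bound is immediate. In that regime the sequence $(\rho_k)$ is monotonically non-increasing and stays in $[0,1]$, so the recursion further weakens to $\rho_{k+1} \ge \rho_k^2/2$. Unrolling this $D$ times produces a lower bound of exactly the form $c_D\,\rho_0^{2^D}$ with an explicit constant $c_D>0$ depending only on $D$, which by an elementary simplification is at least $2^{-D}\rho_0^{2^D}$ as claimed.

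I do not anticipate any real obstacle; the argument is essentially bookkeeping. The only small points to verify are (i) that the non-degeneracy hypothesis of Extended Lemma~\ref{lemma:subset-exp} is preserved under the substitutions $C_i \mapsto C_i'$, which is immediate from $C_i' \subseteq C_i$ together with the fact that any trivial component $\F_q^n$ may be discarded from the collection without changing $\rho$; and (ii) the accounting of the multiplicative constants through the $D$ unrollings of the recursion so as to match the stated $2^{-D}$ factor.
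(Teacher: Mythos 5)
Your route is the paper's: iterate Lemma~\ref{lemma:subcode-exp} one component at a time, control the subcollection term via Lemma~\ref{lemma:subset-exp}, and unroll the recursion $\rho_{k+1}\ge\rho_k^2/2$. Two points in your write-up do not hold up, and the first is a genuine gap that the paper closes with an explicit case split.

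The degenerate case. Your dichotomy ``$\rho_0\le1$ versus $\rho_0>1$'' is vacuous, since $\rho\le1$ always (a codeword supported on a single line already witnesses this); the real obstruction is elsewhere. Lemma~\ref{lemma:subset-exp} requires $\cC_i\subsetneq\F_q^{n}$ for \emph{every} $i$, so the iteration cannot even start if some $C_i=\F_q^n$. Your proposed fix --- that a trivial component $\F_q^n$ ``may be discarded without changing $\rho$'' --- is false: if some $C_i=\F_q^n$ then $C_1\boxplus\dots\boxplus C_D=\F_q^{[n]^D}$ and $\rho(C_1,\dots,C_D)=1/n$, which is generally not the expansion of the reduced collection (the lemma in the paper about discarding components concerns the \emph{zero} code, not the full space). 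The paper handles this by a separate direct estimate: when some $C_i$ has rate $1$ one has $\rho(C_1,\dots,C_D)=1/n$ and $\rho(C_1',\dots,C_D')\ge n^{-(D-1)}>2^{-D}n^{-2^D}$, which is the claimed bound; only in the non-degenerate case is the recursion run, and there all intermediate collections are automatically non-degenerate because $C_i'\subseteq C_i\subsetneq\F_q^n$.

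The constant. Unrolling $\rho_{k+1}\ge\rho_k^2/2$ for $D$ steps gives $\rho_D\ge2^{-(2^D-1)}\rho_0^{2^D}$, and $2^{-(2^D-1)}<2^{-D}$ for $D\ge2$, so your ``elementary simplification'' to the stated constant $2^{-D}$ does not go through. In fairness, the paper's own proof ends with the same unchecked ``apply sequentially'' step and likewise only delivers $2^{1-2^D}\rho_0^{2^D}$; this weaker constant suffices for every downstream use (Lemma~\ref{lemma:prodexp-exists} only needs \emph{some} explicit positive constant depending on $D$ and the rates), but if you want the literal statement you must either weaken the exhibited constant or sharpen the recursion rather than assert the inequality.
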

\begin{proof}
    Consider 2 cases. 
    If one of the codes $C_i$ has rate $1$, then $\rho(C_1,C_2,\dots,C_D)=1/n$; thus, we have
        \begin{align*}            
            \rho(C'_1,\dots,C'_D)\ge 1/n^{D-1}&>2^{-D}n^{-2^D}\\
            &=2^{-D}\bigrbr{\rho(C_1,C_2,\dots,C_D)}^{2^D}.
        \end{align*}
        Otherwise, by Lemma \ref{lemma:subset-exp} we have \[
            \rho(C_2,\dots,C_D) \ge \rho(C_1,\dots,C_D),
        \] hence by Lemma~\ref{lemma:subcode-exp} and taking into account $\rho\le 1$ we have 
        \begin{align*}
            \rho(C_1',C_2,\dots,C_D) 
            &\ge \frac{\rho(C_1,C_2,\dots,C_D)}{1+\rho(C_2,\dots,C_D)^{-1}}\\
            &\ge \frac12\rho(C_1,C_2,\dots,C_D)^2.
        \end{align*}
        Applying this estimation sequentially for each of $D$ codes, we obtain the required inequality.
\end{proof}

\subsection{Products of LTCs are Expanding}

As outlined in Section \ref{sec:outline}, we now execute the first major step of our proof: showing that a collection of locally testable codes (LTCs) is product-expanding. The key insight here is that the local testability of the component codes allows us to bound the distance to a~code by the Hamming weight of the syndrome.

We say that $(\alpha_l,\alpha_h)$ is a~\emph{soundness range} for a~matrix $H\in \F_q^{m\times n}$ if, for any $x\in \F_q^n$, we have 
\[\alpha_l d(x,\ker H)\le |H x|\le \alpha_h d(x,\ker H).\] 
We are interested in the case when $H$ is a~parity-check matrix of an~LTC, and hence, $\alpha_h/\alpha_l=\Theta(1)$.
It is easy to see that the matrices $I_t\otimes H$, $H\otimes I_t$, and $H$ have the same soundness ranges, where $I_t$ is the $t\times t$ identity matrix, $t\in \NN$.

\begin{lemma}\label{lemma:inv-supp}
Let $(\alpha_l, \alpha_h)$ be a~soundness range for a~matrix $H \in \F_q^{m \times n}$, with $d(\ker H) \geq \delta n$. Then for any $A \subseteq [m]$, there exists a~set $B \subseteq [n]$ and a~linear map $\phi_{H,A}:\F_q^A\cap \im H\to \F_q^B$ such that $|B| \leq c|A|$, where $c = \frac{6}{\delta \alpha_l}$, and $H \phi_{H,A}(y)=y$ for all $y \in \im H$ with $\supp y \subseteq A$.
\end{lemma}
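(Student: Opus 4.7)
The plan is to separate the argument into two regimes of $|A|$. If $c|A| \ge n$, then taking $B = [n]$ suffices: since $V := \F_q^A \cap \im H$ is a~subspace of $\im H$, the surjection $H\colon \F_q^n \to \im H$ admits a~linear section, whose restriction to $V$ gives the desired $\phi_{H,A}$, and trivially $|B| = n \le c|A|$.

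In the non-trivial regime $|A| < n/c = \delta \alpha_l n / 6$, I would set $\alpha := |A|/\alpha_l$, so $\alpha < \delta n/6$, and define $\phi_{H,A}(y) := \mu(y)$, where $\mu(y)$ is the unique preimage of $y$ under $H$ of weight at most $\alpha$. Existence of such a~preimage comes from the soundness hypothesis: for any $x$ with $H x = y$, we have $d(x, \ker H) \le |y|/\alpha_l \le \alpha$, so some translate $x - z \in H^{-1}(y)$ has weight $\le \alpha$. Uniqueness is immediate because two such preimages differ by a~kernel element of weight $\le 2\alpha < \delta n$, which must be zero by the minimum-distance hypothesis. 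Linearity of $\mu$ follows from the same distance argument: $\mu(y_1) + \mu(y_2)$ and $\mu(y_1 + y_2)$ are both preimages of $y_1 + y_2$, so their difference lies in $\ker H$ and has weight at most $3\alpha < \delta n$, hence vanishes.

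It remains to bound $|B|$ for $B := \supp(\mu(V))$. Set $W := \mu(V) \subseteq \F_q^n$; by construction every element of the subspace $W$ has weight at most $\alpha$. The key step is to count $\sum_{w \in W} |w|$ in two ways: once per codeword, yielding an upper bound $|W|\alpha$; and once per coordinate $i \in \supp W$, each of which contributes $|W|(1 - 1/q)$ since the evaluation $w \mapsto w_i$ is a~non-zero $\F_q$-linear functional on $W$. Comparing the two expressions yields $|B| \le \frac{q}{q-1}\alpha \le 2\alpha$, and since $\delta \le 1$ this is at most $6|A|/(\delta \alpha_l) = c|A|$.

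The main subtlety is pinning down linearity of the minimum-weight section, which depends on the preimage-weight bound $\alpha$ being strictly less than a~third of the minimum distance. The constant $6$ in $c$ is chosen precisely to guarantee $3\alpha < \delta n$ in the non-trivial regime while simultaneously absorbing the $\frac{q}{q-1} \le 2$ factor that comes out of the two-way counting step.
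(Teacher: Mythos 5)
Your proof is correct, and while it shares the overall skeleton with the paper's proof (trivial case $|A|\ge n/c$ handled by $B=[n]$; otherwise build low-weight preimages and bound their joint support via the fact that the average weight over a nonzero subspace is $(1-1/q)$ times the size of its support), the way you obtain the bound on $|B|$ is genuinely different and arguably cleaner. The paper defines $\phi_{H,A}$ only on a \emph{basis} $y_1,\dots,y_s$ of $\F_q^A\cap\im H$ by minimal-weight preimages and extends by linearity; since a linear combination of the $x_i$ need not have small weight, it must then argue by contradiction, truncating the union of supports to a set $B'$ with $2a/\alpha_l<|B'|\le 3a/\alpha_l$, extracting from the partial span a vector $x$ with $a/\alpha_l<|x|<d(\ker H)/2$, and deriving $|Hx|>a$ against $\supp Hx\subseteq A$. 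You instead observe that the minimal-weight preimage map $\mu$ is itself well defined and linear on all of $V=\F_q^A\cap\im H$ (using $3\alpha<\delta n$ with $\alpha=|A|/\alpha_l$), so that \emph{every} element of $W=\mu(V)$ has weight at most $\alpha$; the double count $|\supp W|\cdot|W|(1-1/q)=\sum_{w\in W}|w|\le |W|\alpha$ then gives $|B|\le 2\alpha\le c|A|$ directly, with no contradiction and no truncation. The only point you leave implicit is homogeneity of $\mu$ (i.e., $\mu(\lambda y)=\lambda\mu(y)$), but this is immediate from the same uniqueness argument since $\lambda\mu(y)$ is a preimage of $\lambda y$ of weight at most $\alpha$. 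Both proofs land on the same bound $|B|\le 2|A|/\alpha_l$; yours makes it more transparent why the constant $6$ in $c=\frac{6}{\delta\alpha_l}$ suffices, at the price of having to verify linearity of the minimum-weight section rather than getting it for free from a basis extension.
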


\begin{proof}
Fix some subset $A \subseteq [m]$, and let $a = |A|$. If $a \geq n/c$, then it suffices to choose $B = [n]$, and the lemma is proved. Thus, we can assume that $a < n/c$. Consider a~basis $y_1, \dots, y_s$ of the vector space \[
    \F_q^A\cap \im H=\{y \in  \im H\ \mid \supp y \subseteq A\}.
\] 
For each $i\in [s]$, choose a~vector $x_i \in \F_q^n$ of minimal weight such that $H x_i = y_i$. Define \[
    B := \bigcup_{i=1}^s \supp x_i,\qquad \phi_{H,A}(y_i):=x_i
\]
for $i\in [s]$ and extend it by linearity on the whole subspace $\F_q^A\cap \im H$. By definition, we have $\im \phi_{H,A}\subseteq \F_q^B$.

Since $(\alpha_l, \alpha_h)$ is a~soundness range for $H$, for all $i\in [s]$ we have $\abs{x_i} \le a/\alpha_l$. 
To prove the lemma, it suffices to show that $|B| \leq 2a/\alpha_l < ac$. 
Assume the converse $|B| > 2a/\alpha_l$, and find the smallest $s'$ such that $|\bigcup_{i=1}^{s'} \supp x_i| > 2a/\alpha_l$. 
Consider $B' := \bigcup_{i=1}^{s'} \supp x_i$. It is easy to see that $s' > 2$ and $2a/\alpha_l < |B'| \leq 3a/\alpha_l$. 
Let us choose uniformly at random a~vector from the linear span $\abr{x_1, \dots, x_{s'}}$. 
It is easy to see that the expected value of its Hamming weight is \[
    |B'|(1-1/q) \geq |B'|/2 > a/\alpha_l,
\]
so there exists a~vector $x \in \abr{x_1, \dots, x_{s'}}$ such that 
\[
a/\alpha_l < |x| \leq |B'| \leq 3a/\alpha_l < \frac{3n}{c\alpha_l} \leq \delta n/2 \leq d(\ker H)/2.
\]
Therefore, $d(x, \ker H) = |x|$, and then $|H x| \geq \alpha_l |x| > a$, which contradicts $\supp H x \subseteq A$. Thus, $|B| \leq 2a/\alpha_l < ac$, and the lemma is proved.
\end{proof}

We will use the following obvious property of the product expansion.

\begin{lemma}\label{lemma:add-zero-code}
    For codes $\cC_1,...,\cC_D\subseteq \F_q^n$ and the zero code $\mathbf{0}\subseteq \F_q^n$ we have \[
    \rho(\mathbf{0},\cC_1,...,\cC_D)= \rho(\cC_1,...,\cC_D).
    \]
\end{lemma}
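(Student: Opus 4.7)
The plan is to reduce to the $D$-dimensional setting by exploiting that the only nontrivial ``extra direction'' introduced by the zero code is a slicing direction. Viewing $x\in \F_q^{[n]^{D+1}}$ as a family of $n$ slices $x^{(j)}\in\F_q^{[n]^D}$ indexed by the first coordinate, and setting $\mathcal{D}=(\mathbf{0},\cC_1,\dots,\cC_D)$, I would first observe that $\mathcal{D}^{(1)}=0$ (since the first factor is zero) while for $i\in[D]$ the subspace $\mathcal{D}^{(i+1)}$ consists exactly of those $x$ whose every slice lies in $\cC^{(i)}$. Consequently $x\in\mathbf{0}\boxplus\cC_1\boxplus\dots\boxplus\cC_D$ \Iff $x^{(j)}\in\cC_1\boxplus\dots\boxplus\cC_D$ for every $j$. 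A direct line count yields the normalization identities $\norm{a}_{i+1}=\frac{1}{n}\sum_j \norm{a^{(j)}}_i$ for $a\in\mathcal{D}^{(i+1)}$ and $\norm{x}=\frac{1}{n}\sum_j\norm{x^{(j)}}$, relating the normalized weights on $[n]^{D+1}$ and $[n]^D$.

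For the bound $\rho(\mathbf{0},\cC_1,\dots,\cC_D)\ge\rho(\cC_1,\dots,\cC_D)$, I would argue slicewise: apply the product expansion of $(\cC_1,\dots,\cC_D)$ to each slice $x^{(j)}$ to obtain decompositions $x^{(j)}=\sum_{i\in[D]} a_i^{(j)}$ with $a_i^{(j)}\in\cC^{(i)}$ satisfying $\rho\sum_i\norm{a_i^{(j)}}_i\le \norm{x^{(j)}}$, and then stack these into $a_{i+1}\in\mathcal{D}^{(i+1)}$ with $x=\sum_i a_{i+1}$. Averaging the slicewise inequalities in $j$ and using the normalization identities directly yields $\rho\sum_i\norm{a_{i+1}}_{i+1}\le\norm{x}$.

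For the reverse bound $\rho(\cC_1,\dots,\cC_D)\ge\rho(\mathbf{0},\cC_1,\dots,\cC_D)=:\rho'$, I would use an averaging-to-a-good-slice argument. Given $y\in\cC_1\boxplus\dots\boxplus\cC_D$, replicate it on every slice to get $x\in\F_q^{[n]^{D+1}}$ with $x^{(j)}=y$, so that $\norm{x}=\norm{y}$ and $x$ lies in the boxplus sum of $\mathcal{D}$. Applying product expansion of $\mathcal{D}$ gives a decomposition $x=\sum_{k=1}^{D+1}a_k$ whose $k=1$ summand vanishes since $\mathcal{D}^{(1)}=0$, and the slice average of $\sum_i\norm{a_{i+1}^{(j)}}_i$ equals $\sum_i\norm{a_{i+1}}_{i+1}\le \norm{y}/\rho'$. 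Hence some slice index $j^*$ attains the average or better, and restricting the decomposition to that slice yields $y=\sum_i a_{i+1}^{(j^*)}$ with $a_{i+1}^{(j^*)}\in\cC^{(i)}$ and the desired weight bound.

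There is no serious obstacle here; the lemma is really a bookkeeping exercise once one identifies a vector in $\F_q^{[n]^{D+1}}$ with its $n$-tuple of slices in $\F_q^{[n]^D}$. The only care needed is in correctly tracking the factor $1/n$ when passing between the two normalizations, and in using the vanishing of $\mathcal{D}^{(1)}$ to discard the extra summand in the reverse direction.
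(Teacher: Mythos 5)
Your argument is correct: the identifications $\mathcal{D}^{(1)}=\{0\}$, $\mathcal{D}^{(i+1)}=\F_q^n\otimes\cC^{(i)}$ (slicewise), and the normalization identities $\norm{a}_{i+1}=\frac1n\sum_j\norm{a^{(j)}}_i$, $\norm{x}=\frac1n\sum_j\norm{x^{(j)}}$ make both inequalities go through exactly as you describe, with the replicate-and-pick-a-good-slice step handling the reverse direction. The paper states this lemma without proof as an ``obvious property,'' so your write-up simply supplies the bookkeeping the authors omit; there is no discrepancy to report.
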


The following lemma plays the~central role in this section. It proves that a~collection of LTCs is product-expanding. The argument proceeds via induction on $D$, the number of component codes. The core technical tool is Lemma~\ref{lemma:inv-supp}, which, using the soundness property of LTCs, allows us to construct a~``preimage'' of a~syndrome with controlled support size. This fact is utilized in the proof to give a~lower bound for the product-expansion.

\begin{lemma}\label{lemma:prod-ltc}
Consider codes $\cC_1,\dots,\cC_D\subseteq \F_q^n$ each having minimum distance at least $\delta n$ and a~parity-check matrix $H_i$ with soundness range $(\alpha_l,\alpha_h)$. Then for all $n$ we have that $\rho(\cC_1,\dots,\cC_D)$ is bounded from below by a~positive function $f(D, \alpha_l, \alpha_h, \delta)$ that does not depend on $n$.
\end{lemma}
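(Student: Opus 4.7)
The plan is to prove this by induction on $D$.

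For the base case $D=1$, every nonzero codeword $c\in\cC_1$ has $|c|\ge\delta n$, and the only decomposition is $a_1=c$ with $|a_1|_1=1$, so the product-expansion inequality reduces to $\rho\cdot n\le|c|$; hence $\rho(\cC_1)\ge\delta$.

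For the inductive step, I would assume that $(\cC_2,\dots,\cC_D)$ is $\rho_{D-1}$-product-expanding and, given $c\in\cC_1\boxplus\cdots\boxplus\cC_D$, use the local testability of $\cC_1$ to peel off the direction-1 contribution. Concretely, let $\tilde H_1:=H_1\otimes I^{\otimes(D-1)}$, which is a parity-check matrix of $\cC^{(1)}$ that inherits the soundness range $(\alpha_l,\alpha_h)$ from $H_1$ and whose kernel has minimum distance $\ge\delta n$. The syndrome $s:=\tilde H_1 c$ satisfies $|s|\le\alpha_h|c|$, so Lemma~\ref{lemma:inv-supp} (applied to $\tilde H_1$ with $A=\supp s$) produces $z\in\F_q^{[n]^D}$ with $\tilde H_1 z=s$ and $|z|\le c_1|s|\le c_1\alpha_h|c|$, where $c_1=6/(\delta\alpha_l)$. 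Setting $a_1^{(0)}:=c-z\in\cC^{(1)}$ gives a first candidate for the direction-1 component.

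The central observation is that each direction-1 slice $s(r,\cdot)\in\F_q^{[n]^{D-1}}$ lies in the $(D-1)$-dimensional code $\cC_2\boxplus\cdots\boxplus\cC_D$: writing $c=a_1+\sum_{i\ge 2}a_i$ with $a_i\in\cC^{(i)}$, one has $s=\sum_{i\ge 2}\tilde H_1 a_i$, and each $(\tilde H_1 a_i)(r,\cdot)$ inherits the $\cC_i$-structure on direction-$(i-1)$ lines of $[n]^{D-1}$, since $H_1$ acts in a direction different from~$i$. Applying the induction hypothesis slice by slice yields $b_i^{(r)}$ with $s(r,\cdot)=\sum_{i=2}^D b_i^{(r)}$ and aggregate cost $\sum_{r,i}|b_i^{(r)}|_{i-1}\cdot n\le|s|/\rho_{D-1}\le\alpha_h|c|/\rho_{D-1}$. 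Using that $s$ lies in $\im H_1$ along direction~1, so that the ``non-image'' parts of the $b_i$'s must cancel, one can take each $b_i:=(b_i^{(r)})_r$ in $\im H_1$ along direction~1, after which Lemma~\ref{lemma:inv-supp} applied to $H_1$ lifts $b_i$ to some $a_i\in\cC^{(i)}$ with $\tilde H_1 a_i=b_i$. Finally, setting $a_1:=a_1^{(0)}+w$ for a corrective $w\in\cC^{(1)}$ so that $c=\sum_i a_i$ exactly, and controlling $|a_1|_1$ through the minimum-distance estimate $|a_1|_1\le|a_1|/(\delta n)$, one obtains $\sum_{i\in[D]}|a_i|_i\cdot n\le K(D,\alpha_l,\alpha_h,\delta)\cdot|c|$ for a constant $K$ depending only on the stated parameters, and thus $\rho_D\ge 1/K>0$.

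The hardest step I anticipate is this lifting: producing $a_i\in\cC^{(i)}$ whose direction-$i$ line count $|a_i|_i$ is bounded independently of $n$. A naive application of Lemma~\ref{lemma:inv-supp} to the full $\tilde H_1$ would produce a preimage whose support spreads over $\Omega(n^{D-1})$ positions and therefore gives an $n$-dependent bound; the needed bound requires carefully exploiting the tensor structure of $b_i$ together with the minimum-distance estimate $|a_i|_i\le|a_i|/(\delta n)$, so that per-column support control in direction~1 translates into line-count control uniform in $n$. Getting this right, together with the $\cC^{(1)}$-correction $w$, is what delivers the final positive constant $\rho_D=f(D,\alpha_l,\alpha_h,\delta)$.
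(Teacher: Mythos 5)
Your base case and overall skeleton (induction on $D$, syndrome decomposition, Lemma~\ref{lemma:inv-supp}) match the paper, but you apply the syndrome map in the \emph{wrong direction}, and this creates two gaps that your sketch does not close. The paper takes $s:=(I_n\otimes H_2\otimes\cdots\otimes H_D)x$, i.e., it checks all directions \emph{except} the first. Because $x\in\ker(H_1\otimes\cdots\otimes H_D)$, this syndrome automatically lies in $\cC_1\otimes\F_q^{m_2\times\cdots\times m_D}$, so every nonzero direction-$1$ line of $s$ is a nonzero codeword of $\cC_1$ of weight at least $\delta n$, and $\supp s\subseteq[n]\times A$ with $|A|\le|s|/(\delta n)$. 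Applying Lemma~\ref{lemma:inv-supp} only in directions $2,\dots,D$ then yields a preimage $a^{(1)}=(I_n\otimes\phi)s$ whose direction-$1$ line count is bounded by $|B|\le c^{D-1}|A|$ directly — no lifting problem arises — and the remainder $x-a^{(1)}$ has zero syndrome, hence lies in $\mathbf{0}\boxplus\cC_2\boxplus\cdots\boxplus\cC_D$, to which the induction hypothesis applies globally (via Lemma~\ref{lemma:add-zero-code}), not slice by slice.

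Your choice $s:=(H_1\otimes I^{\otimes(D-1)})c$ forces you to (a) decompose $s$ slice by slice and then reassemble, and (b) lift the pieces $b_i$ back through $H_1$. Both steps fail as stated. For (a): the slice-wise decompositions $s(r,\cdot)=\sum_i b_i^{(r)}$ are chosen independently for each $r\in[m_1]$, so the aggregated $b_i$ need not lie in $\im H_1$ columnwise — only their \emph{sum} does — and any columnwise projection onto $\im H_1$ acts across the direction-$1$ slices and can multiply the direction-$i$ line counts by a factor of order $m_1=\Theta(n)$. For (b): even granting (a), lifting $b_i$ via Lemma~\ref{lemma:inv-supp} in direction $1$ produces $a_i$ supported on $B\times[n]^{D-1}$ with $|a_i|_i\le|B|\cdot\sum_r|b_i^{(r)}|_{i-1}$, and $|B|$ can be $\Theta(n)$, so the resulting bound on $\sum_i|a_i|_i\cdot n$ is $O(n)\cdot|c|$ rather than $O(1)\cdot|c|$. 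You correctly identify this as the hardest step, but ``carefully exploiting the tensor structure'' is not an argument; the minimum-distance estimate $|a_i|_i\le|a_i|/(\delta n)$ does not rescue you because the lifted $a_i$ has no reason to have its nonzero direction-$i$ lines of large weight beyond what $\cC_i$ guarantees, and the Hamming weight $|a_i|$ itself is already too large. The fix is not a refinement of your lifting step but the paper's reversal of roles: use the distance of $\cC_1$ on the \emph{syndrome} side and the soundness of $H_2,\dots,H_D$ on the \emph{preimage} side, so that the component needing a line-count bound in direction $1$ is produced with that bound built in.
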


\begin{proof}
We will prove the lemma by induction on the dimension $D$. For $D=1$ we have $\rho(\cC_1)=d(\cC_1)/n \ge \delta$, so we set $f(1,\alpha_l,\alpha_h,\delta)=\delta$. Now, we let $D\ge 2$ and assume that the lemma holds for $D-1$.

Let $\cC:=\cC_1\boxplus \dots\boxplus \cC_D$. 
Consider an~arbitrary codeword $x\in \cC$ and its syndrome $s:=(I_n\otimes H_2\otimes \dots\otimes H_D) x$ for the $(D-1)$-dimensional code $\mathbf{0}\boxplus \cC_2\boxplus\dots\boxplus \cC_D$, where $I_n$ is the $n\times n$ identity matrix. 
Then $|s|\le \alpha_h^{D-1}|x|$. 
On the other hand, since
\[(H_1\otimes I_{m_2}\otimes \cdots \otimes I_{m_D})s=(H_1\otimes H_2\otimes \dots\otimes H_D) x=0,\]
we have $s\in \cC_1\otimes \F_q^{m_2\times \dots\times m_D}$ where $m_i$ is the number of rows of the matrix $H_i$ for $i\in [D]$. 
Hence, $\supp s$ is covered by no more than $|s|/d(\cC_1)\le|s|/(\delta n)$ lines in the first direction. Thus $\supp s\subseteq [n]\times A$, where $A\subseteq [m_2]\times \dots\times [m_D]$ and $|A|\le |s|/(\delta n)$.

Let $A_1:=A$. 
Sequentially applying Lemma \ref{lemma:inv-supp} for $i=2,\dots,D$ to the set $A_{i-1}$ and the matrix 
\[
    \hat H_{i}:=I_n^{\otimes (i-2)}\otimes H_{i}\otimes I_{m_{i+1}}\otimes \cdots\otimes I_{m_D},
\] 
each time we get a~set $A_{i}$ and a~map $\phi_{\hat H_i, A_{i-1}}$. 
Finally, we put \[
    B:=A_D,\qquad \phi:=\phi_{\hat H_D,A_{D-1}}\circ \cdots\circ\phi_{\hat H_2, A_1}.
\] 
Then $\phi$ is linear and for all $y\in \im (H_2\otimes \dots \otimes H_D)$ with $\supp y\subseteq A$ we have $\supp \phi(y)\subseteq B$ and \[
    (H_2\otimes \cdots\otimes H_D) \phi(y)=\hat{H}_2\ldots \hat H_D \phi(y)=y.
\] 
Since we applied Lemma \ref{lemma:inv-supp} to matrices with soundness range $(\alpha_l,\alpha_h)$, we have $|B|\le c^{D-1}|A|$, where $c=\frac{6}{\delta \alpha_l}$.

Since $s\in \cC_1\otimes \im (H_2\otimes \dots \otimes H_D)$, the vector \[a^{(1)}:=(I_n\otimes \phi)s\] is well-defined, $a^{(1)}\in \cC_1\otimes \F_q^B\subseteq \cC^{(1)}$, and $|a^{(1)}|_1\le |B|$. Also, by the definition of $\phi$, we have \[(I_n\otimes H_2\otimes\cdots\otimes H_D)a^{(1)}=s.\]

Now note that the vector $x':=x-a^{(1)}$ lies in the code $\mathbf{0}\boxplus \cC_2\boxplus\dots\boxplus \cC_D$, since \begin{multline*}
    (I_n\otimes H_2\otimes \dots\otimes H_D)x'=(I_n\otimes  H_2\otimes \dots\otimes H_D)x\\
    -(I_n\otimes H_2\otimes \dots\otimes H_D)a^{(1)}=s-s=0.
\end{multline*}
By Lemma \ref{lemma:add-zero-code} and the inductive hypothesis we have \[
    \rho(\mathbf{0},\cC_2,\dots, \cC_D)
    =\rho(\cC_2,\dots, \cC_D)
    \ge f(D-1,\alpha_l,\alpha_h,\delta).
\]
Hence, $x'=\sum_{j=2}^D a^{(j)}$ for some $a^{(j)}\in \cC^{(j)}$, $2\le j\le D$ such that \[
    n f(D-1,\alpha_l,\alpha_h,\delta)\sum_{j=2}^D |a^{(j)}|_j\le |x'|.
\]
We have: \[
    |a^{(1)}|
    \le |B|n
    \le c^{D-1}|A|n
    \le \frac{c^{D-1}|s|}{\delta}
    \le c^{D-1}\alpha_h^{D-1}\delta^{-1}|x|.
\] \[  
    |x'|=|x-a^{(1)}|
    \le |x|+|a^{(1)}|
    \le \rbr{1+c^{D-1}\alpha_h^{D-1}\delta^{-1}}|x|.
\]
Therefore, \[
    x=a^{(1)}+x'=\sum_{j=1}^D a^{(j)}, \quad 
    a^{(j)}\in \cC^{(j)}\mbox{ for }j\in [D].
\]
We can estimate $N:=\sum_{j=1}^D |a^{(j)}|_j$ as follows: 
\begin{align*}
    N &\le |B|+\frac{|x'|}{nf(D-1,\alpha_l,\alpha_h,\delta)}\\
    &\le \frac{(c\alpha_h)^{D-1}}{\delta  n}|x| + \frac{\rbr{1+\frac{(c\alpha_h)^{D-1}}{\delta}}|x|}{nf(D-1,\alpha_l,\alpha_h,\delta)}.
\end{align*}
We can simplify this estimate using that \[
c\ge 1,\quad \delta\le 1,\quad \alpha_h\ge 1,\quad f(D-1,\alpha_l,\alpha_h,\delta)\le\delta\le 1,
\]
and substitute $c$ from Lemma~\ref{lemma:inv-supp}:\[
    N\le \frac{|x|}{n}\cdot\frac{3(c\alpha_h)^{D-1}}{\delta\!\cdot\! f(D-1,\alpha_l,\alpha_h,\delta)}
    =\frac{3\rbr{\frac{6\alpha_h}{\alpha_l\delta}}^{D-1}}{\delta\!\cdot\! f(D-1,\alpha_l,\alpha_h,\delta)}\cdot \frac{|x|}{n}.
\]
Thus, we can take $f(D,\alpha_l,\alpha_h,\delta):=\frac{\delta \cdot f(D-1,\alpha_l,\alpha_h,\delta)}{3\rbr{\frac{6\alpha_h}{\alpha_l\delta}}^{D-1}}$, and the lemma is proved.
\end{proof}
Note that in the proof we did not use the soundness range for the code $\cC_1$; therefore $\cC_1$ can be an~arbitrary code with distance at least $\delta n$.

The following statement directly follows from the definition of LTCs, the $\Delta$-limited matrix, and the soundness range.

\begin{lemma}\label{lemma:LTC-exp}
    If a~code $\cC\subseteq \F_q^n$ is $(\Delta, s)$-locally testable with a~$\Delta$-limited parity-check matrix $H$ of size $m\times n$, where $n/2\le m\le n$, then $(s/2,\Delta)$ is a~soundness range for the matrix $H$.
\end{lemma}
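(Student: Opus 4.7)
The plan is to verify the two inequalities defining a soundness range separately; both follow directly from the hypotheses, so the proof is essentially a repackaging of the definitions.

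For the upper bound $|Hx| \le \Delta\, d(x,\ker H)$, I would use the $\Delta$-limited property of the \emph{columns} of $H$. For any $c\in \ker H$, we have $Hx = H(x-c)$, and since each column of $H$ has at most $\Delta$ nonzero entries, each nonzero coordinate of $x-c$ contributes to at most $\Delta$ coordinates of $H(x-c)$. Hence $|Hx| \le \Delta |x-c|$, and minimizing over $c \in \ker H$ gives the upper bound.

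For the lower bound $|Hx| \ge (s/2)\, d(x, \ker H)$, the $(\Delta,s)$-local testability of $\cC = \ker H$ with parity-check matrix $H$ means
$$\frac{1}{m}|Hx| \ge \frac{s}{n}\, d(x,\cC).$$
Multiplying by $m$ and using $m \ge n/2$ yields $|Hx| \ge (sm/n)\, d(x,\ker H) \ge (s/2)\, d(x,\ker H)$, as required.

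There is no real obstacle: the only nontrivial input beyond the definitions is the hypothesis $m \ge n/2$, which is precisely what converts the normalized soundness $s/n$ from the LTC definition into the un-normalized constant $s/2$ appearing in the soundness range. The row-sparsity side of the $\Delta$-limited condition is not used for this lemma, only the column-sparsity bound.
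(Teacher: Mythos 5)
Your proof is correct and matches the paper's intent: the paper gives no explicit argument, stating only that the lemma ``directly follows from the definition of LTCs, the $\Delta$-limited matrix, and the soundness range,'' and your two-step verification (column-sparsity for the upper bound $|Hx|\le\Delta\,d(x,\ker H)$, and $m\ge n/2$ to convert the LTC soundness inequality into the lower bound $|Hx|\ge (s/2)\,d(x,\ker H)$) is exactly the intended unpacking of those definitions.
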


After establishing that collections of specific LTCs are product-expanding, we now use this to show that such collections exist for \emph{any} set of desired rates below one. As mentioned in the outline, this requires some rate adaptation. The following lemma achieves this by combining our main result for LTCs (Lemma~\ref{lemma:prod-ltc}) with the existence of LTCs of arbitrary length (Theorem~\ref{th:LTC}) and the preservation of expansion under taking subcodes (Corollary~\ref{sled:subcodes}). The following lemma provides the final existence result that our main theorem will draw upon.

\begin{lemma}\label{lemma:prodexp-exists}
    For arbitrary $r_1,\dots,r_D\in (0,1)$ there exists $\rho>0$ such that for any $n\in \NN$ and $k_i\in\NN$ such that $k_i\le r_i n$, there exist codes $\cC_i\subseteq\F_2^n$, $i\in [D]$, such that $\dim \cC_i=k_i$ and 
    \[\rho(\cC_1,\dots,\cC_D)\ge \rho.\]
\end{lemma}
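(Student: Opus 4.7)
The plan is to build the required codes as subcodes of locally testable codes whose product expansion is already controlled by Lemma~\ref{lemma:prod-ltc}, and then handle the rate adaptation via Corollary~\ref{sled:subcodes}. The reason this order is necessary is that known LTC constructions do not allow us to prescribe the dimension exactly, only to guarantee it is at least some fixed fraction of the length; therefore, one must first produce LTCs with rate bounded strictly between $0$ and $1$, and only afterwards shrink their dimensions to the prescribed $k_i$.

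First, fix for each $i\in [D]$ a~real number $R_i\in (r_i,1)$. By Theorem~\ref{th:LTC} applied with rate $R_i$, there exist constants $s_i,\Delta_i,\delta_i>0$ (independent of $n$) such that for every $n\in\NN$ there is a~$(\Delta_i,s_i)$-locally testable code $\tilde\cC_i\subseteq \F_2^n$ of dimension $\tilde k_i\ge R_i n$ and distance at least $\delta_i n$, defined by a~$\Delta_i$-limited parity-check matrix $H_i$ whose number of rows lies in $(n/2,n]$. Setting $\alpha_l:=\min_i(s_i/2)$, $\alpha_h:=\max_i\Delta_i$ and $\delta:=\min_i\delta_i$, Lemma~\ref{lemma:LTC-exp} implies that $(\alpha_l,\alpha_h)$ is a~soundness range for every $H_i$, and each $\tilde\cC_i$ has minimum distance at least $\delta n$. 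Applying Lemma~\ref{lemma:prod-ltc} to the collection $(\tilde\cC_1,\dots,\tilde\cC_D)$ then yields
\[
\rho(\tilde\cC_1,\dots,\tilde\cC_D)\ge f(D,\alpha_l,\alpha_h,\delta)=:\rho_0>0,
\]
where $\rho_0$ depends only on $D$ and on $\alpha_l,\alpha_h,\delta$, and hence is independent of $n$.

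Finally, for any target dimension $k_i\le r_i n$ we have $k_i\le r_i n<R_i n\le \tilde k_i$, so one can pick an~arbitrary $k_i$-dimensional subspace $\cC_i\subseteq \tilde\cC_i$. Corollary~\ref{sled:subcodes} then gives
\[
\rho(\cC_1,\dots,\cC_D)\ge 2^{-D}\rho_0^{2^D}=:\rho>0,
\]
which is the desired $n$-independent constant. There is no single hard step in this plan; the main conceptual point is that Theorem~\ref{th:LTC} (LTCs of every length) is essential, since without it Lemma~\ref{lemma:prod-ltc} could not be applied uniformly in~$n$, and the subcode step in Corollary~\ref{sled:subcodes}, although it degrades the constant drastically (from $\rho_0$ to $2^{-D}\rho_0^{2^D}$), still preserves positivity, which is all that is needed.
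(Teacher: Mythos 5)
Your proposal is correct and follows essentially the same route as the paper: Theorem~\ref{th:LTC} to get LTCs of the given length, Lemma~\ref{lemma:LTC-exp} to extract a soundness range, Lemma~\ref{lemma:prod-ltc} for the expansion of the LTC collection, and Corollary~\ref{sled:subcodes} to pass to subcodes of the prescribed dimensions. The only (immaterial) difference is that the paper uses a single LTC at rate $r=\max_i r_i$ for all $D$ coordinates, whereas you take possibly different LTCs per coordinate and unify their soundness ranges by taking $\min$/$\max$, which is equally valid.
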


\begin{proof}
    Let $r:=\max(r_1,\dots,r_D)$. By Theorem~\ref{th:LTC}, for the rate $r$, choose $s>0,\Delta>0,\delta>0$. Let \[
    \rho:=2^{-D}(f(D, s/2,\Delta, \delta))^{2^D},
    \] where the function $f$ is from Lemma~\ref{lemma:prod-ltc}. Fix $n$ and $k_i$ such that $k_i\le r_in$. 
    Let us show that the required collection of codes exists.
    \begin{enumerate}
        \item By Theorem~\ref{th:LTC}, there exists a~$(\Delta,s)$-LTC $\cC$ of length $n$ such that $\dim \cC\ge rn$, $d(C)\ge\delta n$, with a~$\Delta$-limited parity-check matrix $H$ of size $m\times n$, where $n/2\le m\le n$.
        \item By Lemma~\ref{lemma:LTC-exp}, the pair $(s/2,\Delta)$ is a~soundness range for the matrix $H$.
        \item By Lemma~\ref{lemma:prod-ltc}, with $\alpha_l=s/2,\alpha_h=\Delta$, we have 
        \[\rho(\underbrace{\cC,\dots,\cC}_{D\text{ times}})\ge f(D,s/2,\Delta,\delta)>0.\]
        \item For each code $i\in [D]$, we have $\dim \cC\ge rn\ge k_i$; choose an~arbitrary subcode $\cC_i\subseteq \cC$ of dimension $k_i$. By Corollary~\ref{sled:subcodes}, we have
        \begin{align*}
        \rho(\cC_1,\dots,\cC_D)&\ge 2^{-D}\rho(\cC,\dots,\cC)\\
        &\ge 2^{-D}(f(D, s/2,\Delta, \delta))^{2^D}=\rho.
        \end{align*}
    \end{enumerate}
    The lemma is proved.
\end{proof}

\subsection{Maximally Extendable Product Codes}

In the previous sections we showed that products of LTCs have good expansion. However, as highlighted in our proof outline (Section \ref{sec:outline}), this is insufficient for our goals, as the duals of LTCs have poor distance. To circumvent this obstacle, we now introduce the central notion of this paper: \emph{maximally extendable product codes}. Our strategy is to reformulate product expansion in terms of extendable subsets and then show that random codes possess a~universal property that allows them to inherit good expansion from the LTC-based construction.

The \emph{$\eps$-closure of the set $M \subseteq [n]^D$} is defined as the minimum set $[M]_\eps \subseteq [n]^D$ containing $M$ such that for any line $\ell\in \mathcal{L}(n,D)$, either $\ell\subseteq [M]_\eps$ or $|\ell\cap [M]_\eps|<\eps n$. If $[M]_\eps=M$, then the set $M$ is called \emph{$\eps$-closed}.

\begin{figure*}[hbt]
    \centering
\newcommand{\drawgrid}[3]{%
  \draw[gray!50, thin, step=1] (0,0) grid (6,6);
  \foreach \x/\y in #1 {
    \fill[black, opacity=0.3]
      (\x,\y) rectangle ++(1,1);
  }
  \if\relax\detokenize{#2}\relax
  \else
  \foreach \x/\y in #2 {
    \fill[red, opacity=0.3]
      (\x,\y) rectangle ++(1,1);
  }
  \fi
  \if\relax\detokenize{#3}\relax
  \else
      \foreach \x/\y in #3 {
        \fill[blue, opacity=0.3]
          (\x,\y) rectangle ++(1,1);
      }
  \fi
}
\newcommand{\steparrow}{
    \draw[-{Latex[length=2mm, width=1.5mm]}, line width=0.5pt] (-1.2, 3) -- (-0.3, 3);
}

\begin{tikzpicture}[scale=0.4]

\def\InitialCells{0/5, 4/5, 5/5, 5/4, 1/3, 2/3, 4/2, 4/1, 5/1, 0/0,1/0, 2/0,3/0}
\def\StepA{4/0, 5/0}
\def\StepB{4/3, 4/4, 5/2, 5/3}
\def\StepAB{4/0, 5/0,4/3, 4/4, 5/2, 5/3}
\def\StepC{0/3, 3/3}
\def\StepABC{4/0, 5/0,4/3, 4/4, 5/2, 5/3,0/3, 3/3}
\begin{scope}
    \drawgrid{\InitialCells}{}{};
    \node[below] at (3,0) {$M$};
\end{scope}

\begin{scope}[xshift=7.5cm]
    \steparrow
    \drawgrid{\InitialCells}{\StepA}{};
    \draw[red, thick] (0, 0) rectangle (6, 1);
\end{scope}

\begin{scope}[xshift=15cm]
    \steparrow
    \drawgrid{\InitialCells}{\StepB}{\StepA};
    \draw[red, thick] (4.0, 0.0) rectangle (5.0, 6.0);
    \draw[red, thick] (5.0, 0.0) rectangle (6.0, 6.0);
\end{scope}

\begin{scope}[xshift=22.5cm]
    \steparrow
    \drawgrid{\InitialCells}{\StepC}{\StepAB};
    \draw[red, thick] (0, 3) rectangle (6, 4);
\end{scope}

\begin{scope}[xshift=30cm]
    \steparrow
    \drawgrid{\InitialCells}{}{\StepABC};
    \node[below] at (3,0) {$[M]_{\eps}$};
\end{scope}

\end{tikzpicture}
    \caption{Steps for constructing $\eps$-closure of a~set $M\subseteq [n]^2$ shown for $\eps=1/2$ and $n=6$. At each step we add to the set all lines intersecting with it in more than $\eps n$ points.}
    \label{fig:epsclosure}
\end{figure*}

\newcommand{\epsmax}{\eps_{\max}}

For a~collection of codes $\cC_1,\ldots,\cC_D\subsetneq \F_q^n$ let us define $\epsmax(\cC_1,\ldots,\cC_D)$ as the maximum $\eps>0$ such that each $\eps$-closed set in $[n]^D$ is inner-generated for the code $\cC_1\boxplus\ldots\boxplus\cC_D$. The motivation for this notation comes from Proposition~\ref{pr:1}, which implies that for a~collection $\cC_1,\ldots,\cC_D$ such that $\otimes_{i\in[D]}\cC_i^\perp$ is maximally extendable we have
\begin{equation}\label{eq:expmax-inherit}
\epsmax(\cC_1,\ldots,\cC_D)\ge \epsmax(\cC'_1,\ldots,\cC'_D)
\end{equation}
for any other collection of codes $\cC'_1,\ldots,\cC'_D$ with the same rates and lengths.

As we will prove later, every product code $\cC=\otimes_{i\in[D]}\cC_i$ has good expansion provided that the code $\cC^\top=\otimes_{i\in[D]}\cC_i^\perp$ is maximally extendable.  
However, the proof is not direct and involves a~connection between the product-expansion factor $\rho$ and the parameter $\epsmax$ that we defined. The main idea is to establish this connection in the following few lemmas, and then conclude that the collection of codes $\cC_1,\ldots,\cC_D$ inherits $\epsmax$ from the codes constructed in Lemma~\ref{lemma:prodexp-exists}, and therefore also inherits $\rho$ with some constant loss.

The following lemma shows that $\rho$-product-expansion implies that all $\rho$-closed sets are inner-generated.

\begin{lemma}\label{lemma:inner-generated}
 Let $\cC_1,\dots,\cC_D\subsetneq \F_q^n$, $\rho:=\rho(\cC_1,\dots,\cC_D)$. Then each $\rho$-closed subset $M\subseteq [n]^D$ is inner-generated for the code $\cC:=\cC_1\boxplus\ldots\boxplus\cC_D$.
\end{lemma}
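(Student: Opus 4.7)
My plan is to argue by contradiction via a minimum-weight counterexample. Assume that some $\rho$-closed $M$ is not inner-generated, and choose $c\in \cC$ of minimum Hamming weight subject to $\supp c\subseteq M$ but $c\notin \sum_{\ell\in L(M)}\cC_\ell$. By the definition of $\rho$-product-expansion, I can fix a decomposition $c=\sum_{i\in [D]}a_i$ with $a_i\in\cC^{(i)}$ satisfying $\rho n\sum_{i}|a_i|_i\le |c|$.

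Each $a_i$ is a direct sum of codewords along lines in direction $i$, so splitting these line-components according to whether the supporting line $\ell$ lies in $L(M)$ or not produces $a_i=a_i^{\mathrm{in}}+a_i^{\mathrm{out}}$. Setting $c^{\mathrm{in}}:=\sum_i a_i^{\mathrm{in}}\in \sum_{\ell\in L(M)}\cC_\ell$ and $c':=c-c^{\mathrm{in}}=\sum_i a_i^{\mathrm{out}}\in \cC$, I note that $c'$ is supported in $M$ (since $c$ and $c^{\mathrm{in}}$ are), nonzero (else $c=c^{\mathrm{in}}$), and not in $\sum_{\ell\in L(M)}\cC_\ell$ (else $c=c^{\mathrm{in}}+c'$ would be).

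The contradiction then comes from bounding $|c'|$ via $\rho$-closedness. Let $L_i^{\mathrm{out}}$ denote the set of direction-$i$ lines that contribute non-trivially to $a_i^{\mathrm{out}}$; by construction $\supp c'\subseteq \bigcup_i\bigcup_{\ell\in L_i^{\mathrm{out}}}(\ell\cap M)$. For each such $\ell\not\subseteq M$ the $\rho$-closedness of $M$ forces $|\ell\cap M|<\rho n$, while product expansion yields $\sum_i|L_i^{\mathrm{out}}|\le \sum_i|a_i|_i\le |c|/(\rho n)$. Multiplying these two bounds gives $|c'|<\rho n\cdot |c|/(\rho n)=|c|$, so $c'$ is a strictly lighter counterexample than $c$, contradicting the minimality of $|c|$. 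The only delicate point worth flagging is the strictness of $|\ell\cap M|<\rho n$, which is built into the definition of $\rho$-closedness and is precisely what drives the strict weight decrease that closes the argument.
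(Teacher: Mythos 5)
Your proposal is correct and follows essentially the same argument as the paper: a minimum-weight counterexample, a decomposition into line codewords split by whether the line lies in $L(M)$, and the strict bound $|\ell\cap M|<\rho n$ from $\rho$-closedness combined with the product-expansion count of lines to produce a strictly lighter counterexample. The only cosmetic difference is that you phrase the decomposition via the $a_i\in\cC^{(i)}$ rather than directly as a sum of $\le |c|/(\rho n)$ codewords $x_i\in\cC_{\ell_i}$, which is equivalent.
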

\begin{proof}
    For a~set $M\subseteq [n]^D$ denote $\cC_{L(M)}:=\sum_{\ell\in L(M)}\cC_\ell$.
    Suppose there exists a~$\rho$-closed set $M\subseteq[n]^D$ that is not inner-generated. 
    Then the set $S:=(\cC\cap \F_q^M)\setminus \cC_{L(M)}$ is non-empty.
    Consider a~vector $x\in S$ of minimal weight. 
    Since $\rho=\rho(\cC_1,\dots,\cC_D)$, the vector $x$ can be represented as a~sum $\sum_{i=1}^s x_i$ of $s$ words $x_i\in\cC_{\ell_i}$ along some lines $\ell_i\in\cL(n,D)$ where $s\le |x|/(\rho n)$. 
    Without loss of generality, we can assume that the lines are ordered in such a way that $\ell_i\in L(M)$ for $1\le i\le s'$ and $\ell_i\not\in L(M)$ for $s'<i\le s$. 
    Let \[
        x':=\sum_{i=1}^{s'} x_i,\qquad x'':=x-x'=\sum_{i=s'+1}^s x_i.
    \] 
    Then $\supp x'\subseteq M$, and therefore $\supp x''\subseteq M$. 
    Since ${x\not\in\cC_{L(M)}}$, $x'\in \cC_{L(M)}$ we have $x''\not\in \cC_{L(M)}$. 
    Hence, $x''\in S$. 
    Since the set $M$ is $\rho$-closed and $\ell_i\not\in L(M)$ for $s'<i\le s$, we have $|\ell_i\cap M|<\rho n$. 
    Taking into account that $\supp x''\subseteq M$, we obtain
    \begin{align*}
    |x''|\le\sum_{i=s'+1}^{s}\bigabsbr{x_i|_{M}} 
    &\le \sum_{i=s'+1}^{s}|\ell_i\cap M|\\
    &< (s-s')\rho n
    \le s\rho n\le |x|. 
    \end{align*}
    Thus, we have a contradiction with the assumption that $x$ has the minimal weight in the set $S$. This completes the proof.
\end{proof}

\begin{lemma}\label{lemma:closure-size}
 For any $\eps\in (0,1]$ and $D\in\NN$, there exists a~constant $c>0$ such that $|[M]_\eps|\le c|M|$ for any $n\in\NN$ and $M\subseteq [n]^D$.
\end{lemma}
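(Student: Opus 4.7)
The plan is to proceed by induction on $D$. The base case $D = 1$ is immediate: the only axis-parallel line in $[n]$ is $[n]$ itself, so either $|M| < \eps n$ (and $[M]_\eps = M$) or $|M| \ge \eps n$ (and $[M]_\eps = [n]$); in either case $|[M]_\eps| \le |M|/\eps$, so $c_1(\eps) := 1/\eps$ works.

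For the inductive step with $D \ge 2$, assume the lemma in dimension $D-1$ with constant $c_{D-1}$, and let $M \subseteq [n]^D$ and $M^* := [M]_\eps$. Slicing along the $D$-th coordinate, define $M^*_t := \{y \in [n]^{D-1} : (y, t) \in M^*\}$ and $M_t := \{y \in [n]^{D-1} : (y, t) \in M\}$ for each $t \in [n]$, and let $P \subseteq [n]^{D-1}$ consist of those $y$ for which the entire direction-$D$ line through $y$ lies in $M^*$. Two structural facts drive the argument: \textbf{(i)} each slice $M^*_t$ equals the $(D-1)$-dimensional $\eps$-closure of $M_t \cup P$, since $M^*_t$ is $\eps$-closed in $[n]^{D-1}$, contains $M_t \cup P$, and no in-slice line addition during the formation of $M^*$ can use anything beyond in-slice points together with the direction-$D$ full lines (whose projections form $P$); and \textbf{(ii)} the set $P$ is itself $\eps$-closed in $[n]^{D-1}$: if $\ell' \subseteq [n]^{D-1}$ has $|\ell' \cap P| \ge \eps n$, then the $2$-plane in $[n]^D$ spanned by $\ell'$ and the $D$-th axis contains $\ge \eps n$ direction-$D$ full lines, so every line in this $2$-plane parallel to $\ell'$ hits $M^*$ in $\ge \eps n$ points and is forced to be full by the $\eps$-closedness of $M^*$, which in turn makes every direction-$D$ line in the $2$-plane full, giving $\ell' \subseteq P$. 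Applying the inductive hypothesis slice-by-slice and summing yields $|M^*| \le c_{D-1}(|M| + n|P|)$.

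It remains to bound $n|P|$ by a constant multiple of $|M|$. Here I would use fact (ii) to apply the inductive hypothesis to $P$ itself, taking as seed the combination of the projections of initially-dense direction-$D$ lines of $M$ (whose number is at most $|M|/(\eps n)$) together with auxiliary seeds accounting for cross-direction cascades into $P$; a careful combinatorial charging that traces each direction-$D$ full line back to a bounded set of witness points in $M$ via the closure history then yields $n|P| \le K|M|$ for a constant $K = K(\eps, D)$. Combining, $|M^*| \le c_D(\eps)\,|M|$ with $c_D(\eps) := c_{D-1}(\eps)(1 + K)$, depending only on $\eps$ and $D$.

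The main obstacle is precisely this last bound on $n|P|$: the naive estimate $|P| \le |M^*|/n$ is circular, so one must identify a bounded seed generating $P$ (equivalently, construct the charging scheme) in order to avoid any factor of $n$. Once this is settled, the rest of the induction is routine accounting, and the resulting constant $c(\eps, D)$, while likely far from optimal, suffices for the lemma.
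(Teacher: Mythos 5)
There is a genuine gap, and it sits exactly where you flag it: the bound $n|P|\le K|M|$. Everything else in your induction is bookkeeping, but this is the entire content of the lemma, and your proposal does not prove it --- it defers to ``a careful combinatorial charging that traces each direction-$D$ full line back to a bounded set of witness points in $M$ via the closure history,'' which is a restatement of the goal rather than an argument. The difficulty is real: a direction-$D$ line can become full not because it initially meets $M$ in $\ge\eps n$ points (those lines are easy to count, at most $|M|/(\eps n)$ of them), but because $\ge\eps n$ of its points are supplied by in-slice lines added earlier in the closure, and those in-slice lines can in turn be dense only because of points supplied by other direction-$D$ lines. Any attempt to charge a full line to points of $M$ through this history runs into exactly the circularity you identify (the naive estimate $|P|\le|M^*|/n$), and the ``auxiliary seeds accounting for cross-direction cascades'' that would break the circularity are the missing construction. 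Your claims (i) and (ii) are plausible (and (ii) is a nice observation), but without the charging scheme the induction does not close.

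For comparison, the paper sidesteps the cascade entirely with a direct, non-inductive construction: set $\eps'=\eps/2^D$ and, for each $I\subseteq[D]$, let $M_I$ be the union of all $|I|$-dimensional axis-parallel hyperplanes (in the directions of $I$) containing at least $(\eps'n)^{|I|}$ points of $M$; then $M':=\bigcup_{I\subseteq[D]}M_I$ is shown to be $\eps$-closed, hence contains $[M]_\eps$. The key step --- if a line $\ell$ in direction $j$ meets $M_I$ in $\ge\eps'n$ points, then the $(|I|+1)$-dimensional hyperplane through $\ell$ in directions $I\cup\{j\}$ contains $\ge(\eps'n)^{|I|+1}$ points of $M$ and so lies in $M_{I\cup\{j\}}$ --- is precisely the device that tames the cascade you are struggling with: every ``indirectly filled'' line is absorbed into a dense hyperplane of one higher dimension, and dense hyperplanes are counted directly against $|M|$ via $|M_I|\le|M|(\eps')^{-|I|}$. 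If you want to salvage your inductive route, you would essentially have to rediscover this hierarchy of dense subspaces to control $|P|$; as written, the proof is incomplete.
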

\begin{proof}
 Consider a~set $M$. We are going to construct a~set $M'$ that contains $[M]_\eps$. Set $\eps':=\eps/2^D$. For a~subset $I\subseteq [D]$, let $M_I$ denote the set of cells covered by the $|I|$-dimensional hyperplanes that are parallel to the axes corresponding to the indices in $I$, containing at least $(\eps' n)^{|I|}$ elements of $M$. Then $M':=\bigcup_{I\subseteq [D]} M_I$ contains $M$. 
 
 Let us show that $[M]_\eps\subseteq M'$. To this end, it suffices to verify that the set $M'$ is $\eps$-closed. Consider an~arbitrary line $\ell\in \cL(n,D)$ such that $\ell\not\subseteq M'$. Suppose that $|\ell\cap M_I|\ge \eps'n$ for some $I\subseteq [D]$. Then in an~$(|I|+1)$-dimensional hyperplane $P$ that passes through the line $\ell$ in directions from the set $I$, there are at least $\eps'n$ $|I|$-dimensional hyperplanes from $M_I$, meaning \[
    |P\cap M|\ge \eps'n\cdot (\eps' n)^{|I|}=(\eps' n)^{|I|+1}.
\]
Therefore, $P\in M_{I\cup\{j\}}$, where $j$ is the direction of the line $\ell$, meaning $\ell\subseteq P\subseteq M'$, a contradiction. Thus, $|\ell\cap M_I|<\eps'n$. Then \[
    |\ell\cap M'|\le\sum_{I\subseteq [D]}|\ell\cap M_I|< 2^D\eps'n=\eps n.
\]
Since the line $\ell$ was arbitrary and not completely contained in $M'$, the set $M'$ is $\eps$-closed.
 
 Now, it remains to estimate $|M'|$. It is easy to see that $|M_I|\le (|M|/(\eps'n)^{|I|}) n^{|I|}\le |M|(\eps')^{-|I|}$. Therefore, 
 \begin{align*}
 |[M]_\eps|\le |M'|\le\sum_{I\subseteq [D]}|M|(\eps')^{-|I|}&=|M|(1+1/\eps')^D\\
 &\le |M|\rbr{\frac{2^D+1}{\eps}}^{D},
 \end{align*}
 and we can put $c=\rbr{\frac{2^D+1}{\eps}}^{D}$, completing the proof.
\end{proof}

\begin{lemma}\label{lemma:ig-exp}
    Let $\cC_1,\dots,\cC_D\subsetneq \F_q^n$, $\eps>0$ be such that each $\eps$-closed subset $M\subseteq [n]^D$ for code $\cC:=\cC_1\boxplus\ldots\boxplus\cC_D$ is inner-generated. Then $\rho(\cC_1,\dots,\cC_D)\ge \gamma(\eps,D)$ 
    where $\gamma(\eps,D):=\frac{\eps^D}{D(2^D+1)^D}$.
\end{lemma}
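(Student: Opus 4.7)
The plan is to take an arbitrary codeword $c\in \cC$, apply the hypothesis to the $\eps$-closure $M':=[\supp c]_\eps$ of its support, and then bound the required sum $\sum_i \absbr{a_i}_i$ by an elementary line-counting argument combined with Lemma~\ref{lemma:closure-size}. To begin, set $M:=\supp c$. By the definition of the $\eps$-closure, the set $M'$ is $\eps$-closed, hence by assumption it is inner-generated for $\cC$. Since $\supp c\subseteq M'$, this yields a~decomposition
\[
c=\sum_{\ell\in L(M')} c_\ell, \qquad c_\ell\in \cC_\ell.
\]
Grouping summands by the direction of the underlying line, I set $a_i := \sum_{\ell\in L(M')\cap \cL_i} c_\ell$ for each $i\in [D]$. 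Since $\cC_\ell\subseteq \cC^{(i)}$ whenever $\ell\in \cL_i$, we have $a_i\in \cC^{(i)}$ and $c=\sum_{i\in[D]} a_i$, giving the candidate decomposition.

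For the quantitative bound, I would use that the parallel lines in any single direction $\cL_i$ are pairwise disjoint and each has $n$ cells; hence $\absbr{L(M')\cap \cL_i}\le \absbr{M'}/n$. Since $a_i$ is supported on the union of these direction-$i$ lines, this gives $\absbr{a_i}_i\le \absbr{M'}/n$, and summing over $i$ yields $\sum_{i\in [D]} \absbr{a_i}_i\le D\absbr{M'}/n$. Applying Lemma~\ref{lemma:closure-size} with the explicit constant $c_\eps:=\bigrbr{(2^D+1)/\eps}^D$ derived in its proof gives $\absbr{M'}\le c_\eps\absbr{M}=c_\eps\absbr{c}$, so
\[
\frac{\absbr{c}}{n\sum_{i\in[D]}\absbr{a_i}_i}\ge \frac{1}{Dc_\eps} = \frac{\eps^D}{D(2^D+1)^D} = \gamma(\eps,D).
\]
By the reformulation \eqref{eq:prod-exp2} of product expansion, and since $c\in \cC$ was arbitrary, this establishes $\rho(\cC_1,\dots,\cC_D)\ge \gamma(\eps,D)$.

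I do not anticipate any hard step here: once the hypothesis is invoked on $[\supp c]_\eps$ and Lemma~\ref{lemma:closure-size} is available, the remainder is elementary. The only points worth checking carefully are that the parallel-line disjointness genuinely gives the bound $\absbr{L(M')\cap \cL_i}\le \absbr{M'}/n$ (which is immediate since disjoint length-$n$ lines contained in $M'$ partition a~subset of $M'$), and that the line-decomposition produced by inner-generatedness supplies valid components in each $\cC^{(i)}$ (which follows from $\cC_\ell\subseteq \cC^{(i)}$ for $\ell\in\cL_i$). With these observations in place, the constants line up exactly to yield $\gamma(\eps,D)$.
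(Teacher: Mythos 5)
Your proof is correct and follows essentially the same route as the paper: take the $\eps$-closure of the support, invoke inner-generatedness to get a line decomposition, and bound the number of lines via Lemma~\ref{lemma:closure-size}. The only cosmetic difference is the counting step (you use disjointness of parallel lines per direction, the paper uses that each point lies on at most $D$ lines), and both give the same bound $D\absbr{M}/n$ and the same constant $\gamma(\eps,D)$.
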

\begin{proof}
    Consider an arbitrary codeword $x\in \cC$ and let $M:=[\supp x]_\eps$. 
    By Lemma \ref{lemma:closure-size} we have $|M|\le c|x|$ where $c=\rbr{\frac{2^D+1}{\eps}}^{D}$. 
    Since $M$ is $\eps$-closed, it is inner-generated; hence, $x=\sum_{\ell\in L(M)}x_\ell$ for some $x_\ell\in\cC_\ell$. 
    Therefore, $x$ is represented as the sum of at most $|L(M)|$ codewords along the lines contained in $M$. 
    Since each line has size $n$, and each point belongs to at most $D$ lines, we have 
    \[|L(M)|\le |M|\frac{D}{n}\le cD\frac{|x|}{n}=\frac{1}{\gamma(\eps,D)}\frac{|x|}{n}.\]
    Therefore, $\rho(\cC_1,\dots,\cC_D)\ge \gamma(\eps,D)$.
\end{proof}
In terms of $\rho$ and $\epsmax$,\[
\rho=\rho(\cC_1,\ldots,\cC_D),\qquad \epsmax=\epsmax(\cC_1,\ldots,\cC_D),
\]
we proved that $\rho$ is bounded from above and below by functions of $\epsmax$:
\[
    \gamma(\epsmax,D)
    \le \rho \le \epsmax,
\]
where the left inequality follows from Lemma~\ref{lemma:ig-exp} and the right inequality follows from Lemma~\ref{lemma:inner-generated}.

The previous lemmas show that good product expansion is equivalent to the condition that all $\eps$-closed sets are inner-generated for some constant $\eps>0$. 
The next lemmas show that for random codes over a~sufficiently large field, $\epsmax$ is bounded from below by a~constant depending only on the rates, with high probability.

\begin{lemma}\label{lemma:universal-exp}
    For every $D\in\NN$ there is a~function $\mu_D\colon (0,1)^D\to (0,1)$ such that for every tuple of rates $(r_1,\dots,r_D)\in(0,1)^D$ and a~maximally extendable code $\cC_1^\perp\otimes\dots\otimes\cC_D^\perp$ such that  $\cC_i\in \Gr_{2^t}(n,k_i)$ and  $k_i\le r_i n$ we have $\rho(\cC_1,\dots,\cC_D)\ge \mu_D(r_1,\dots, r_D)$.
\end{lemma}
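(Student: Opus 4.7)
The plan is to combine the existence of product-expanding LTC-based collections from Lemma~\ref{lemma:prodexp-exists} with the hypothesis of maximal extendability, using the duality between inner-generated and extendable sets (Proposition~\ref{pr:1}) as a bridge. The chain of implications $\rho \Rightarrow \epsmax \Rightarrow$ (inner-generation/extendability) $\Rightarrow \epsmax \Rightarrow \rho$ established in the preceding lemmas is exactly what makes this transfer possible, so no new combinatorial or probabilistic argument is required.

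First, I would invoke Lemma~\ref{lemma:prodexp-exists} with the given rates $(r_1,\dots,r_D)$ and the exact dimensions $k_i=\dim\cC_i\le r_i n$ to obtain a binary reference collection $\cC'_1,\dots,\cC'_D\subseteq\F_2^n$ with $\dim\cC'_i=k_i$ and
\[
\rho(\cC'_1,\dots,\cC'_D)\ge \rho_0
\]
for some $\rho_0=\rho_0(r_1,\dots,r_D)>0$ depending only on the rates and on $D$. By Lemma~\ref{lemma:inner-generated}, every $\rho_0$-closed subset $M\subseteq[n]^D$ is then inner-generated for $\cC'_1\boxplus\dots\boxplus\cC'_D$, and hence, by Proposition~\ref{pr:1}, every such $M$ is extendable in $(\cC'_1)^\perp\otimes\dots\otimes(\cC'_D)^\perp$.

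Second, I would transfer this property to the given codes via maximal extendability. Because $\dim\cC_i^\perp=n-k_i=\dim(\cC'_i)^\perp$ for every $i\in[D]$, Definition~\ref{def:ME} applies to the pair of product codes (the definition explicitly allows the two codes to live over different fields $\F_{2^t}$ and $\F_{2^{t'}}$, requiring only that the component dimensions match). Therefore every set extendable in $(\cC'_1)^\perp\otimes\dots\otimes(\cC'_D)^\perp$ is also extendable in $\cC_1^\perp\otimes\dots\otimes\cC_D^\perp$. Applying Proposition~\ref{pr:1} in the reverse direction, every $\rho_0$-closed set is inner-generated for $\cC_1\boxplus\dots\boxplus\cC_D$; that is, $\epsmax(\cC_1,\dots,\cC_D)\ge\rho_0$, which is precisely the instance of~\eqref{eq:expmax-inherit} corresponding to the reference collection $\cC'_1,\dots,\cC'_D$.

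Finally, I would close the loop with Lemma~\ref{lemma:ig-exp} applied with $\eps=\rho_0$, yielding
\[
\rho(\cC_1,\dots,\cC_D)\ge \gamma(\rho_0,D)=\frac{\rho_0^D}{D(2^D+1)^D},
\]
and then set $\mu_D(r_1,\dots,r_D):=\gamma(\rho_0(r_1,\dots,r_D),D)$, which depends only on $D$ and the rate vector, as required. The only step that is not pure bookkeeping is the inheritance step: one must carefully verify that Definition~\ref{def:ME} does not require the two product codes to share a common field and that the duality in Proposition~\ref{pr:1} correctly pairs the component dimensions $k_i$ on one side with $n-k_i$ on the other. Both checks are immediate from the definitions, so I expect no real obstacle; the substance of the argument has already been packaged into the preceding lemmas.
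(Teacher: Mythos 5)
Your proposal is correct and follows essentially the same route as the paper's own proof: invoke Lemma~\ref{lemma:prodexp-exists} to get a binary reference collection, convert its product expansion into inner-generation of $\hat\rho$-closed sets via Lemma~\ref{lemma:inner-generated}, transfer that property to the given codes through Proposition~\ref{pr:1} and maximal extendability, and conclude with Lemma~\ref{lemma:ig-exp}, setting $\mu_D=\gamma(\hat\rho,D)$. Your explicit checks that Definition~\ref{def:ME} permits different base fields and that the dual dimensions $n-k_i$ match are exactly the points the paper leaves implicit, so there is no gap.
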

\begin{proof}
    Fix $D\in\NN$ and a~tuple of rates $(r_1,\dots,r_D)\in(0,1)^D$. 
    Let $\hat\rho$ be a~constant from Lemma \ref{lemma:prodexp-exists} for rates $r_1,\dots,r_D$. Put $\mu_{D}(r_1,\dots,r_D):=\gamma(\hat\rho,D)$ where $\gamma$ is the function from Lemma \ref{lemma:ig-exp}.

    Consider a~maximally extendable code $\cC_1^\perp\otimes\dots\otimes\cC_D^\perp$ that satisfies the conditions of the lemma. By Lemma \ref{lemma:prodexp-exists} there exist codes $\hat\cC_i\in\Gr_2(n,k_i)$, $i\in[D]$ such that \[\rho(\hat\cC_1,\dots,\hat\cC_D)\ge\hat\rho.\] By Lemma \ref{lemma:inner-generated} each $\hat\rho$-closed subset $M\in[n]^D$ is inner-generated for the code $\hat\cC:=\hat\cC_1\boxplus\dots\boxplus \hat\cC_D$. Since the product code $\cC_1^\perp\otimes\dots\otimes\cC_D^\perp$ is maximally extendable, by Proposition~\ref{pr:1} each $\hat\rho$-closed subset $M\in[n]^D$ is also inner-generated for the code $\cC:=\cC_1\boxplus\dots\boxplus \cC_D$. Therefore, by Lemma \ref{lemma:ig-exp} we have \[\rho(\cC_1,\dots,\cC_D)\ge\gamma(\hat\rho,D)=\mu_D(r_1,\dots,r_D).\] 
\end{proof}

In the rest of this section, we prove that a~randomly chosen product code over a~sufficiently large finite field is maximally extendable with high probability. This is the final key ingredient for our main theorem. The proof closely follows the strategy from~\cite[Lemma~32]{gopalanExplicitMaximallyRecoverable2014}, which involves parameterizing the space of codes with polynomial matrices and then applying the Schwartz-Zippel lemma.

The next lemma follows from Theorem~1 in~\cite{panteleevMaximallyExtendableSheaf2024}, but for the reader's convenience, we provide a~direct proof here, which also comes with better bounds. 
\begin{lemma}\label{lemma:rand-universal}
    For $(\cC_1,\dots,\cC_D)$ picked uniformly at random from $\Gr_{2^t}(n,k_1)\times\cdots\times\Gr_{2^t}(n,k_D)$, the code $\cC_1\otimes\dots\otimes\cC_D$ is maximally extendable with probability at least $1 - n^D 2^{n^D - t + 1}$.
\end{lemma}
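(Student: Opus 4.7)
The plan is to follow the Schwartz--Zippel approach used in the proof of~\cite[Lemma~32]{gopalanExplicitMaximallyRecoverable2014} for maximally recoverable codes. I would parameterize each $\cC_i^\perp$ by its generator matrix $H_i\in\F_{2^t}^{(n-k_i)\times n}$, so that a~uniform choice from the Grassmannian corresponds to uniform sampling of the entries of $H_i$ (up to a~negligible full-rank conditioning). For each subset $M\subseteq[n]^D$, I would define a~small family of non-zero polynomials in the entries of $(H_1,\dots,H_D)$ whose simultaneous non-vanishing implies that $M$ is extendable in $\cC_1\otimes\dots\otimes\cC_D$. A~union bound over the $2^{n^D}$ subsets, combined with Schwartz--Zippel, will then yield the stated probability.

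To construct these polynomials, I would first invoke Proposition~\ref{pr:1} to reformulate extendability of $M$ in $\cC_1\otimes\dots\otimes\cC_D$ as inner generation of $M$ in the dual code $\mathcal{D}:=\cC_1^\perp\boxplus\dots\boxplus\cC_D^\perp$, i.e., the equality $\mathcal{D}\cap\F_q^M=\sum_{\ell\in L(M)}\mathcal{D}_\ell$, where $\mathcal{D}_\ell$ denotes the line-supported dual codewords (a vector is in $\mathcal{D}_\ell$ if its support lies in $\ell$ and its restriction to $\ell$ lies in $\cC_i^\perp$ for the direction $i$ of $\ell$). Since the inclusion $\supseteq$ is automatic, this equality is equivalent to the rank identity
\[
\rk(B_M)+\rk(\bar B_M)=\dim\mathcal{D},
\]
where $B_M$ has as its columns the line parity checks supported on lines in $L(M)$, restricted to the rows indexed by $M$, and $\bar B_M$ is the (redundant) generator matrix of $\mathcal{D}$ formed by \emph{all} line parity checks, restricted to the columns indexed by $[n]^D\setminus M$. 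The sum of these two ranks is always at most $\dim\mathcal{D}$; equality either holds on a~Zariski-open dense subset of the parameter space (precisely when it is achievable for some choice of codes) or fails everywhere, in which case $M$ is extendable in no code and therefore places no constraint on maximal extendability.

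The key observation is that the entries of both $B_M$ and $\bar B_M$ are \emph{linear} in the entries of the $H_i$. When the rank identity is achievable, the condition that each rank attains its maximum value $r^*$ is witnessed by the non-vanishing of a~specific non-identically-zero minor of size at most $r^*\le n^D$, and hence by a~polynomial of degree at most $n^D$ in the $H_i$-entries. By the Schwartz--Zippel lemma, each such polynomial vanishes with probability at most $n^D/2^t$, so the probability that our code fails to extend $M$, given that some code does extend $M$, is at most $2n^D/2^t$. Taking the union bound over all $2^{n^D}$ subsets $M\subseteq[n]^D$ yields total failure probability at most $2^{n^D}\cdot(2n^D/2^t)=n^D\cdot 2^{n^D-t+1}$, matching the claim.

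The main obstacle will be the clean derivation of the rank identity and the verification that the Zariski-open loci where the two ranks attain their individual maxima intersect non-trivially, which follows from the irreducibility of each Grassmannian, and hence of the product parameter space. A~minor technical nuance is the passage from uniform sampling over full-rank matrices to uniform sampling of entries over $\F_{2^t}$; this can be handled by observing that the full-rank event already occurs with overwhelming probability and does not affect the leading asymptotics.
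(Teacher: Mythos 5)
Your proposal is correct and follows essentially the same route as the paper: parameterize the parity-check matrices by indeterminates, certify extendability of every subset $S\subseteq[n]^D$ by the non-vanishing of maximal minors of the same two submatrices the paper uses (the line checks for lines inside $S$ restricted to $S$, and the full check matrix restricted to the complement of $S$), and conclude by Schwartz--Zippel with a union bound over the $2^{n^D}$ subsets, yielding the identical degree count $n^D 2^{n^D+1}$ and hence the stated probability. The only difference is in how the rank conditions are justified --- you go through Proposition~\ref{pr:1} and the identity $\rk B_M+\rk \bar B_M=\dim\mathcal{D}$, while the paper argues via maximal recoverability and information sets --- and your slightly loose treatment of the full-rank conditioning is easily made exact by observing that a good substitution already forces each $H_i$ to have full rank (take $M$ equal to a single line), so conditioning on full rank can only increase the success probability.
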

\begin{proof}
    The proof proceeds in three main steps. First, we construct a generic parity-check matrix for the product code whose entries are polynomials. Second, we show that substituting variables with random finite field elements yields a~maximally extendable code with high probability. Finally, we connect this result for random matrices back to random codes.

    \textbf{Step 1: Polynomial parameterization.}
    It is clear that the parity-check matrix $H$ for the code $\cC_1\otimes\dots \otimes \cC_D$ can be obtained by stacking the matrices $\otimes_{i\in[D]}M_{i,j}$, $j\in [D]$, where $M_{j,j}=H_j$ and $M_{i,j}=I_{n}$ for $i\ne j$ (each such matrix checks the constraints of the product code in the $j$-th direction). 
    
    Given a~collection of free variables $\vp$, let $\F_2[\vp]$ and $\F_2(\vp)$ denote, respectively, the ring of polynomials and the field of rational functions in the variables from $\vp$.   
    Consider a~parametrization for the codes $\cC_1\otimes\dots \otimes \cC_D$ by \emph{polynomial} parity-check matrices $H_i\in \F_2[\vp]^{(n - k_i)\times n}$, where $(H_i)_{u,v} := p_{i,u,v}$, and $\vp:=(p_{i,u,v})_{i\in [D],u\in [n-k_i],v\in [n]}$ is the collection of $N:=n\sum_{i=1}^D (n-k_i)$ free parameters we used. 

    \textbf{Step 2: Good substitutions and Schwartz-Zippel Lemma.}
    For a~subset of indices $S\subseteq [n]^D$ we consider two submatrices $H_S$ and $H^S$ of $H$. The matrix $H_S$ is obtained from $H$ by retaining only the columns corresponding to the codeword symbols with indices in $S$, while $H^S$ is composed of the rows of $H_S$ corresponding to the parity-check equations in $H$ involving only such symbols.
    We say that a~vector $\va\in\F_{2^t}^N$ is a~\emph{good substitution} for the polynomial matrix $H$ if for each $S\subseteq [n]^D$, we get:
    \begin{equation}\label{eq:good-cond}
    \rk H'(\va)=\rk H'; H'\in\{H_S,H^S\},    
    \end{equation}
    where the left-hand side rank is taken over the finite field $\F_{2^t}$ and the right-hand side over the infinite field $\F_2(\vp)$. Let us denote by $U(2^t)$ the set of all good substitutions over the field $\F_{2^t}$. Note that for $S=\ell\in \cL_i(n,D)$ we have $H^S(\va)=H_i(\va)$, therefore for a good substitution $\va\in U(2^t)$ all matrices $H_i(\va)$ have maximal rank.

    \begin{claim}\label{cl:MR-universal}
        If $\va\in U(2^t)$, then $\cC=\cC_1\otimes\dots \otimes \cC_D$ is maximally extendable, where $\cC_i = \ker H_i(\va)$, $i\in [D]$.
    \end{claim}
    \begin{proof}[Proof of Claim]
    \begin{figure}
        \centering
        \begin{tikzpicture}[scale=0.87]

\draw (0,0) rectangle (10,4);
\fill[gray!30] (0,0) rectangle (3,4);
\node at (1,1.75) {$0$};
\node at (8,1.75) {$0$};
\node at (4,1.75) {$H^S$};
\draw (2,1) rectangle (3,2.5);
\draw[dashed] (0,1) -- (10,1);
\draw[dashed] (0,2.5) -- (10,2.5);
\fill[gray!60] (2,1) rectangle (3,2.5);

\draw (2,1) rectangle (6,2.5);
\draw[dashed] (2,0) -- (2,4);
\draw[dashed] (6,0) -- (6,4);
\draw [decorate,decoration={brace,amplitude=10pt}] (3,0) -- (6,0) node[midway, yshift=0.6cm] {$J$};
\draw [decorate,decoration={brace,mirror,amplitude=12pt}] (3,0) -- (10,0) node[midway, yshift=-0.6cm] {$I$};
\draw [decorate,decoration={brace,amplitude=10pt}] (2,4) -- (6,4) node[midway, yshift=0.6cm] {$S$};
\end{tikzpicture}
        \caption{$S$ is extendable in $\cC=\ker H(\va)$ \Iff $\ker H^S(\va) = \cC|_S$ \Iff any information set $J$ of $\ker H^S(\va)$ is contained in some information set $I$ of $\ker H(\va)$. 
        In this case, the gray submatrices of $H(\va)$ and $H^S(\va)$ both must be full-rank.}
        \label{fig:extendable}
    \end{figure}

    The main idea of the proof is a~simple observation that $S\subseteq [n]^D$ is extendable in $\cC$ \Iff $\cC|_S = \ker H^S(\va)$, where $\cC|_S = \{c|_S \mid c \in \cC\}$ represents the vectors in $\F_{2^t}^S$ that can be extended to full codewords, and the \emph{local code} $\ker H^S(\va)$ represents the vectors in $\F_{2^t}^S$ satisfying all local checks imposed by the codes $\cC_\ell$, $\ell \subseteq S$. Alternatively, $S$ is extendable \Iff every information set $J$ of the local code $\ker H^S(\va)$ can be extended to an~information set $I$ of the global code $\cC = \ker H(\va)$ (see Fig.~\ref{fig:extendable}). 

    Clearly, the inclusion $\cC|_S \subseteq \ker H^S(\va)$ always holds. Thus, in order to show that a~set $S$ is extendable in $\cC$, it is enough to check that \begin{equation}\label{eqn:check-ext}
        \dim \cC|_S \ge \dim \ker H^S(\va).
    \end{equation}
    
    First, note that $\cC = \ker H(\va)$ is \emph{maximally recoverable}~\cite{chenMaximallyRecoverableProperty2007, gopalanExplicitMaximallyRecoverable2014}, which means that for any~other product code $\ker H(\va')$ with the same dimension, if $I$ is an~information set for $\ker H(\va')$, then $I$ is also an~information set for $\cC$. This follows directly from the fact that $\va \in U(2^t)$, and substituting parameters into a~polynomial matrix does not increase its rank:
    \begin{align*}
    n^D - |I| = \rk H_{[n]^D \setminus I}(\va') 
    &\le \rk H_{[n]^D \setminus I}\\ 
    &= \rk H_{[n]^D \setminus I}(\va) \le n^D - |I|.
    \end{align*}
    
    For a~good substitution $\va \in U(2^t)$, the dimension of the local code $\ker H^S(\va)$ is \emph{minimal} among all possible substitutions $\va'$, since $\rk H^S(\va') \le \rk H^S = \rk H^S(\va)$. At the same time, the dimension of the projection $\cC|_S$ is \emph{maximal}, as it equals the largest size of a~set $J \subseteq S$ that can be extended to an~information set of $\cC$. This maximality is guaranteed because $\cC$ is maximally recoverable.
    
    Finally, if $S$ is extendable in another product code $\cC' = \ker H(\va')$ with $\dim \cC' = \dim \cC$, then $S$ is also extendable in $\cC$. This follows because $\cC'|_S = \ker H^S(\va')$, and the maximality of $\dim \cC|_S$ combined with the minimality of $\dim \ker H^S(\va)$ implies 
    \[
    \dim \cC|_S\ge \dim \cC'|_S=\dim \ker  H^S(\va')\ge \dim \ker H^S(\va), 
    \]
    therefore \eqref{eqn:check-ext} holds for $S$ and $\cC|_S = \ker H^S(\va)$.
    Thus, $S$ is extendable in $\cC$, proving that $\cC$ is maximally extendable. 
    \end{proof} 
    \begin{claim}\label{cl:proof}
         \[\PP\fbr{\va \in U(2^t)}\ge 1 - n^D 2^{n^D - t + 1}.\]
    \end{claim}
    \begin{proof}[Proof of Claim]
        Note that we have $2^{n^D + 1}$ polynomial matrices $H_S$ and $H^S$ for all possible $S\subseteq [n]^D$. Let us arbitrarily choose one maximal non-singular submatrix in each such matrix, and denote by $f$ the product of their determinants. 
        From conditions \eqref{eq:good-cond} we see that for $\va\in \F^N_{2^t}$ we have $f(\va)\ne 0$ \Iff $\va\in U(2^t)$. 
        It is also clear that $\deg f \le n^D 2^{n^D + 1}$. 
        Thus, by the Schwartz–Zippel lemma (e.g., see~\cite[Theorem~7.2]{motwaniRandomizedAlgorithms1995}), the probability that for a~uniformly random $\va\in \F^N_{2^t}$ we have $\va \not\in U(2^t)$ is bounded above as \[\deg f/2^t \le n^D 2^{n^D - t + 1},\] and the claim is proved.      
    \end{proof}
 \textbf{Step 3: From Random Matrices to Random Codes.}
 To complete the proof of the lemma we also need to show that the uniform distribution over \emph{codes} is approximately the same as the uniform distribution over \emph{parity-check matrices}. This follows from the following facts:
 \begin{itemize}
    \item If we pick a code $\cC\in\Gr_{2^t}(n,k)$ uniformly at random, and pick a~matrix $H$ for $\cC$ uniformly at random from the set of $(n-k)\times n$ parity-check matrices of the code $\cC$, then the distribution of matrices $H$ obtained this way is uniform over all $(n-k)\times n$ matrices of full rank.
    \item For a good substitution $\va$, all matrices $H_i(\va)$ have maximal rank, thus
    \begin{multline*}
        \PP\rbr{\va\in U(2^t)\mid \dim \ker H_i(\va)=k_i\mbox{ for }i\in[D]}\\
        \ge \PP\rbr{\va\in  U(2^t)}\ge 1 - n^D 2^{n^D - t + 1}.
    \end{multline*}
    \item The parameter vector $\va$ corresponds exactly to the elements of all parity-check matrices $H_i$.
 \end{itemize}
 Thus, if we pick uniformly at random a tuple of codes $(\cC_1,\dots,\cC_D)\in\Gr_{2^t}(n,k_1)\times\cdots\times\Gr_{2^t}(n,k_D)$ and then pick uniformly at random parity-check matrices $H_1,\dots,H_D$ for these codes, with probability at least $1 - n^D 2^{n^D - t + 1}$ the corresponding substitution $\va$ is good. Hence, by Claim~\ref{cl:MR-universal}, the code $\cC_1\otimes\dots \otimes \cC_D$ is maximally extendable.
\end{proof}

Now we are ready to prove the main theorem.
\ThMain*
\begin{proof}
   Put $\rho:=\mu_D(R_1,\dots,R_D)$ where $\mu$ is the function from Lemma \ref{lemma:universal-exp}.
   Fix some numbers $n\in\NN$, $k_i\le R_i n$, $t\in\NN$. 
   By Lemma \ref{lemma:universal-exp}, if for a~collection of codes $(\cC_1,\dots,\cC_D)\in\Gr_{2^t}(n,k_1)\times\cdots\times\Gr_{2^t}(n,k_D)$ the code $\cC_1^\perp\otimes\dots\otimes\cC_D^\perp$ is maximally extendable, then this collection is $\rho$-product-expanding. 
   Moreover, according to Lemma \ref{lemma:rand-universal}, if we pick these codes uniformly at random, the probability of this event is at least $1- n^D 2^{n^D - t + 1}\to 1$ as $t\to\infty$. 
   This completes the proof.
\end{proof}


\bibliographystyle{IEEEtran}
\bibliography{IEEEabrv,codes.bib}

\appendix

\section{LTCs of Arbitrary Length}\label{ap:LTC}

In this section, we give the proof of Theorem~\ref{th:LTC} by constructing good LTCs of arbitrary length out of the good LTCs proposed by Dinur~et al.~\cite{Dinur:stoc2022}. 
However, first, we give an~extended formulation of Theorem~1.1 from~\cite{Dinur:stoc2022}, where we also include additional information about the constructed LTCs that appeared in its proof.
\begin{exttheorem}\label{th:Dinur:LTC}
    For every $r\in (0,1)$ there exist constants $s>0$, $\delta>0$, $\Delta\in\NN$, and $q\in\NN$ such that for all $i\in \NN$ there exist $(\Delta,s)$-locally testable $[n_i,k_i,d_i]$ codes over $\F_2$ with $\Delta$-limited parity-check matrix $H_i$ for some $k_i\ge rn_i$ and $d_i\ge\delta n_i$, where $n_i:=\frac18 \Delta^2 (q^{3i}-q^i)$.
\end{exttheorem}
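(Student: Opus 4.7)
The plan is to invoke Theorem~1.1 of Dinur et al.~\cite{Dinur:stoc2022} essentially verbatim, treating this extended formulation as a matter of tracking which data their proof actually produces, rather than as a new mathematical claim. The only work is to read off the closed form of the sequence of blocklengths from their construction.

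The key steps are as follows. First, fix the rate $r$ and invoke~\cite{Dinur:stoc2022} to obtain constants $s$, $\delta$, $\Delta$, and an appropriate prime power $q$; this already supplies the $(\Delta,s)$-locally testable codes with rate at least $r$, relative distance at least $\delta$, and a $\Delta$-limited parity-check matrix. Second, unwind their construction: the codes are Tanner codes on left--right Cayley complexes $\mathrm{Cay}^2(G_i;A,B)$ for a group family $G_i$ (essentially $\mathrm{PSL}_2(\F_{q^i})$ in characteristic $2$) together with fixed symmetric generating sets $A,B\subseteq G_i$ of size $\Delta$ drawn from an LPS-type Ramanujan graph family. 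The code bits correspond to $2$-cells (squares) of the complex, and a direct count gives $|G_i|\,\Delta^2/4$ squares, which becomes $|G_i|\,\Delta^2/8$ after the standard identification of each square with its opposite. Third, compute $|G_i|=q^i(q^{2i}-1)=q^{3i}-q^i$ to obtain the stated formula $n_i=\tfrac18\Delta^2(q^{3i}-q^i)$.

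The main obstacle is not analytical but bibliographic: one has to verify that the specific group family and generator sets used in~\cite{Dinur:stoc2022} really do yield orders of the form $q^{3i}-q^i$, and that $\Delta$ can indeed be taken constant in $i$ (that is, the degree of the underlying Ramanujan graph family does not grow with $i$, only the group does). Once this is confirmed, all quantitative bounds---rate, distance, locality, and soundness---transfer unchanged from their theorem, and no further argument is needed beyond recording the closed form of $n_i$.
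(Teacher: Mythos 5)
Your proposal matches the paper's treatment: the paper likewise does not reprove the result but cites Theorem~1.1 of Dinur et al.\ and, in a remark, extracts the same construction details (squares of the Cayley square complex as bits, $\tfrac18\Delta^2|G_i|$ of them with $|G_i|=q^{3i}-q^i$, and the column-weight bound from each square lying in four local codes of rate $\ge 3/4$). The only nuance you gloss over, which the paper flags explicitly, is that the paper's single-row notion of local testability differs from the whole-local-code test of Dinur et al., so the soundness and locality constants transfer only up to a constant factor rather than ``unchanged''---but this does not affect the truth of the statement.
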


\begin{remark}\label{rm:Dinur:LTC}
In Theorem 1.1 of~\cite{Dinur:stoc2022}, LTCs are constructed using the square complex $\cay(A,G,B)$ with two generating sets $A$ and $B$ of size $\Delta$, where the bits are assigned to squares and the checks to the edges. The row weights in the parity-check matrix $H_i$ of the resulting $[n_i,k_i,d_i]$ code do not exceed $\Delta$. Each local code has rate no less than $3/4$, so its parity-check matrix has no more than $\Delta/4$ rows, and each square participates in $4$ local codes corresponding to the edges of the square. Therefore, each square is involved in at most $\Delta$ checks; thus, the matrix $H_i$ is $\Delta$-limited. It will be convenient to choose an overcomplete parity-check matrix for local codes by duplicating some of its rows multiple times so that it has exactly $\floorbr{\Delta/4}$ rows; in this case, the matrix $H_i$ of the entire LTC will have $n$ columns and $m=\frac{4\floorbr{\Delta/4}}{\Delta}n$ rows. Assuming that the rate is $<1$, we get $\Delta\geq 4$, $n/2<m\leq n$.    
\end{remark}

\begin{remark}
It is also worth noting that the test in~\cite{Dinur:stoc2022}, which checks the entire set of rows in the parity-check matrix corresponding to the product of local codes, is slightly different from our definition of a~local code, where exactly one row of the parity-check matrix is checked. Therefore, our parameters may differ by a~constant factor from the parameters of the test from~\cite{Dinur:stoc2022}. In particular, according to our definition, the locality of the constructed code is $\Delta$, while in the test from~\cite{Dinur:stoc2022} the locality is $\Delta^2$. Furthermore, our soundness parameter $s$ differs from the parameter $\kappa$ in~\cite{Dinur:stoc2022}. 
\end{remark}

We will also need a~lemma that the direct sum of locally testable codes is also a~locally testable code.

\begin{lemma}\label{lemma:sum-LTC}
    Let $\cC \subseteq \F_2^{n}$ be a~$(\Delta,s)$-locally testable code of length $n$ with a~$\Delta$-limited parity-check matrix of size $m \times n$. Then for any $t\in\NN$ the code $\cC\otimes \F_2^t$ is also a~$(\Delta,s)$-locally testable code with a~$\Delta$-limited parity-check matrix of size $tm \times tn$.
\end{lemma}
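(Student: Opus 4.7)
The plan is to verify the claim directly by exhibiting a parity-check matrix and computing. Take $H \in \F_2^{m \times n}$ to be the $\Delta$-limited parity-check matrix of $\cC$ witnessing $(\Delta, s)$-local testability, and let $H' := I_t \otimes H \in \F_2^{tm \times tn}$. This $H'$ is block-diagonal with $t$ copies of $H$ on the diagonal, so each of its rows and columns inherits the weight bound, and $H'$ is $\Delta$-limited. It is also a parity-check matrix of $\cC \otimes \F_2^t$: viewing a vector $x \in \F_2^{tn}$ as a concatenation $x = (x_1, \dots, x_t)$ with $x_i \in \F_2^n$, we have $H' x = (H x_1, \dots, H x_t)$, which vanishes if and only if each $x_i \in \cC$.

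For local testability, the key observation is that both the syndrome weight and the distance to the code split as sums over the $t$ blocks. Concretely,
\[
|H' x| = \sum_{i=1}^t |H x_i|, \qquad d(x, \cC \otimes \F_2^t) = \sum_{i=1}^t d(x_i, \cC),
\]
where the second identity uses that $\cC \otimes \F_2^t$ is the direct sum of $t$ independent copies of $\cC$, so the closest codeword is obtained by taking the closest $\cC$-codeword in each block independently.

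Applying the local testability of $\cC$ to each $x_i$ gives $\tfrac{1}{m}|Hx_i| \ge \tfrac{s}{n} d(x_i, \cC)$; summing over $i\in[t]$ and dividing by $t$ yields
\[
\frac{1}{tm}|H' x| \;=\; \frac{1}{t}\sum_{i=1}^t \frac{1}{m}|H x_i| \;\ge\; \frac{s}{tn}\sum_{i=1}^t d(x_i, \cC) \;=\; \frac{s}{tn}\, d(x, \cC \otimes \F_2^t),
\]
which is exactly the $(\Delta, s)$-local testability condition for the code of length $tn$ with $tm$ checks.

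There is essentially no hard step here; the only thing to be careful about is the convention choice (using $I_t \otimes H$ rather than $H \otimes I_t$, corresponding to viewing $\cC \otimes \F_2^t$ as $t$ parallel copies of $\cC$) and the justification that distance in a direct sum decomposes additively. Everything else is a one-line summation.
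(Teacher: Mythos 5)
Your proof is correct and follows essentially the same route as the paper's: both take the block-diagonal parity-check matrix with $t$ copies of $H$, use that distance to the direct-sum code and syndrome weight both split additively over the blocks, and sum the per-block testability inequalities (the paper writes the matrix as $H\otimes I_t$ but displays the same block-diagonal form you obtain from $I_t\otimes H$, so this is only a labeling convention). No gaps.
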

\begin{proof}
    Let $H$ be the parity-check matrix of the code $\cC$. Consider the $m'\times n'$ parity-check matrix of the code $\cC':=\cC\otimes \F_2^t$: 
    \[H':=H\otimes I_t=\begin{pmatrix}
        H & 0 & \cdots & 0\\ 
        0 & H & \cdots & 0\\
        \vdots&\vdots&\ddots&\vdots\\
        0&0&\cdots&H
    \end{pmatrix},\]
    where $I_t$ is the identity matrix of size $t\times t$, $m'=tm$, $n'=tn$.

    Consider an arbitrary $x=(x_1,\ldots,x_t)\in \F_2^{n'}$, $x_i\in \F_2^{n}$ for $i\in[t]$. 
    Then we have
    \begin{align*}
        d(x,\cC')&=\sum_{i=1}^t d(x_i,\cC)
        \le \sum_{i=1}^t\frac{1}{s}\cdot\frac{n}{m}|H x_i|\\
        &= \frac{1}{s}\cdot\frac{n'}{m'}\sum_{i=1}^t|H x_i|
        = \frac{1}{s}\cdot\frac{n'}{m'}|H' x|.
    \end{align*}
    Furthermore, it is easy to see that the maximum weight of the rows and columns in the matrix $H'$ is the same as in the matrix $H$.   
    The lemma is proved.
\end{proof}

Additionally, we will need a~lemma that the local testability is preserved when adding a~zero code.

\begin{lemma}\label{lemma:pad-LTC}
    Consider a~$(\Delta,s)$-locally testable code $\cC \subsetneq \F_2^{n}$ with a~$\Delta$-limited parity-check matrix $H$ of size $m \times n$, $n/2 \le m \le n$, and let $\mathbf{0}_t \subseteq \F_2^t$ be the zero code of length $t$. Then the code $\cC' := \cC \oplus \mathbf{0}_t$ is a~$(\Delta, s')$-locally testable code with a~$\Delta$-limited parity-check matrix where $s':=\min(\frac{s}{2}, 1)$.
\end{lemma}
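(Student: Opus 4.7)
The plan is to explicitly construct a parity-check matrix for $\cC'$ by augmenting $H$ with weight-one checks on the padded coordinates, and then verify the soundness inequality directly by splitting a test vector into its $\cC$-part and its zero-padded part.

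Concretely, set
\[
H' := \begin{pmatrix} H & 0 \\ 0 & I_t \end{pmatrix} \in \F_2^{(m+t)\times(n+t)},
\]
so that $H'(x,y)^\top = 0$ iff $Hx=0$ and $y=0$, i.e., $\ker H' = \cC'$. Each row of $H'$ has weight either at most $\Delta$ (top block) or exactly $1$ (bottom block), and since $\Delta \ge 1$ (as $\cC \subsetneq \F_2^n$ forces $H$ to have a non-zero row), the same bound holds for columns. Hence $H'$ is $\Delta$-limited, with $m' := m+t$ rows and $n' := n+t$ columns.

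For the soundness, fix $x' = (x,y) \in \F_2^{n+t}$. Then $|H'x'| = |Hx| + |y|$, and because any codeword of $\cC'$ has the form $(c,0)$ with $c \in \cC$,
\[
d(x',\cC') = d(x,\cC) + |y|.
\]
Using the $(\Delta,s)$-local testability of $\cC$, we have $|Hx| \ge \frac{sm}{n}\, d(x,\cC)$. It therefore suffices to verify
\[
\frac{sm}{n}\, d(x,\cC) + |y| \;\ge\; \frac{s'(m+t)}{n+t}\bigrbr{d(x,\cC) + |y|}.
\]
The $|y|$-coefficient is non-negative because $s' \le 1$ and $m+t \le n+t$. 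For the $d(x,\cC)$-coefficient, the key estimate is
\[
\frac{m(n+t)}{n(m+t)} \;=\; \frac{m}{n}\cdot\frac{n+t}{m+t} \;\ge\; \frac{m}{n} \;\ge\; \frac12,
\]
where we used $n \ge m$ in the middle step and $m \ge n/2$ at the end. This gives $\frac{sm}{n} \ge \frac{s}{2}\cdot \frac{m+t}{n+t}$, so choosing $s' = \min(s/2,1)$ makes the $d(x,\cC)$-coefficient non-negative as well.

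The only conceptual obstacle is that as $t$ grows, the denominator $n+t$ inflates while the genuine testing work is still confined to $H$; the assumption $m \ge n/2$ is exactly what prevents this rate adaptation from degrading soundness below $s/2$. No nontrivial step beyond this bookkeeping is required, so the proof is short.
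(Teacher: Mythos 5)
Your proof is correct and follows essentially the same route as the paper: the same block matrix $H'=\bigl(\begin{smallmatrix}H&0\\0&I_t\end{smallmatrix}\bigr)$, the same decomposition $d(x',\cC')=d(x,\cC)+|y|$ and $|H'x'|=|Hx|+|y|$, with the bounds $m\ge n/2$ and $m+t\le n+t$ doing the same work (you compare coefficients term by term where the paper chains the inequalities through the ratio $n'/m'$, but this is only a presentational difference).
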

\begin{proof}
    Consider the parity-check matrix of the code $\cC'$:
    \[
    H' := \begin{pmatrix}
        H & 0 \\ 0 & I_t
    \end{pmatrix},
    \]
    where $I_t$ is the identity matrix of size $t \times t$. 
    Let $n' := n + t$, $m' := m + t$ be the number of columns and the number of rows in the matrix $H'$, respectively.
    From the constraints on $m, n$, it is easy to see that $1 \le \frac{n'}{m'} \le \frac{n}{m} \le 2 \frac{n'}{m'}$.
   
    Consider an arbitrary vector $w = (x, y) \in \F_2^{n'}$ where ${x \in \F_2^{n}}$, $y \in \F_2^t$. 
    Then we have:
    \begin{align*}
        d(w, \cC') &= d(x, \cC) + |y| 
        \le \frac{1}{s} \cdot \frac{n}{m} |H x| + |y| \\
        &\le \frac{1}{s} \cdot 2 \frac{n'}{m'} |H x| + \frac{n'}{m'} |I_t y| \\
        &\le \max \left( \frac{2}{s}, 1 \right) \frac{n'}{m'} (|H x| + |y|) 
        = \frac{1}{s'} \cdot \frac{n'}{m'} |H' w|.
    \end{align*}
    It is easy to see that the weight of the columns and rows of the matrix $H'$ is bounded by $\max(\Delta, 1) = \Delta$.
    The lemma is proved.
\end{proof}

\begin{proof}[Proof of Theorem \ref{th:LTC}]
    Let $r := (1 + R) / 2$. According to Theorem \ref{th:Dinur:LTC}, there exist constants $s_0 > 0$, $\delta_0 > 0$, ${\Delta \in \NN}$, ${q \in \NN}$ such that for any $i \in \NN$ there exists a~$(\Delta, s_0)$-locally testable $[n_i, k_i, d_i]$ code $\cC_i$ with a~$\Delta$-limited parity-check matrix $H_i$ of size $m_i \times n_i$ (according to Remark~\ref{rm:Dinur:LTC}, we can ensure that $n_i / 2 < m_i \le n_i$), where $n_i = \frac{1}{8} \Delta^2 (q^{3i} - q^i)$, $k_i \ge r n_i$, $d_i \ge \delta n_i$.

    Let us define 
    \begin{align*}
        &\delta := \min \left( \frac{1 - r}{n_1}, \delta_0 \frac{1-r}{q^3 + q} \right), && s := \min \left( \frac{1 - r}{n_1}, \frac{s_0}{2} \right),
    \end{align*}
    and show that for any $n \in \NN$ there exists a~code with the required parameters.
    
    Fix an arbitrary $n \in \NN$.
    If $n < n_1 / (1 - r)$, then the code $\cC = \F_2^n$ with rate $1$ has distance $1 > \delta n$. This trivial code is $(\Delta, s)$-locally testable since the distance from any vector to the code is zero. It can be defined by a zero $n\times n$ parity-check matrix to satisfy the statement of the theorem.
    
    Next, consider the main case $n \ge n_1 / (1 - r)$. 
    Note that for all $i \in \NN$ the following holds
    \[
    n_{i+1} / n_i = q \frac{q^{2i+2} - 1}{q^{2i} - 1} \leq q (q^2 + 1).
    \]
    Let us put
    \[
    j := \max \left\{ i \in \NN \mid n_i \le (1 - r) n \right\}.
    \] 
    Then $j \ge 1$ and $n_j \le n (1 - r) < n_{j+1}$, which implies \[
        n_j \ge \frac{n_{j+1}}{q^3 + q} > \frac{1 - r}{q^3 + q} n.
    \] 
    Let $n = n_j t + u$, $t \in \NN$, $0 \le u < n_j$. 
    Consider the codes $\cC' := \cC_j\otimes \F_2^t$ and $\cC := \cC' \oplus \mathbf{0}_u$. 
    Let us find the parameters of the code $\cC$.
    It is easy to see that the code $\cC'$ has length $tn_j = n - u$, and the code $\cC$ has length $n$.
    For the dimension $k := \dim \cC$, we get
    \begin{align*}
    k &= \dim \cC' = t \dim \cC_j = t k_j 
    \ge t n_j r = n r - u r\\
    & > n r - n (1 - r) r = nr^2 > n (2r - 1) = n R.
    \end{align*}
    For the distance we have 
    \[
    d(\cC) = d(\cC') = d(\cC_j) \ge \delta_0 n_j > \delta_0 \frac{1 - r}{q^3 + q} n = \delta n.
    \]

    By Lemma \ref{lemma:sum-LTC}, code $\cC'$ is a $(\Delta, s_0)$-locally testable code with a $\Delta$-limited parity-check matrix $H'$ of size $m' \times n'$, $n' / m' = n_j / m_j \in [1, 2)$. Therefore, by Lemma \ref{lemma:pad-LTC}, the code $\cC$ is a~$(\Delta, \min(1, \frac{s_0}{2}))$-locally testable code with a $\Delta$-limited parity-check matrix.
    Thus, we have constructed a~code with the required parameters, and the theorem is proved.
\end{proof}

\end{document}